\newcommand*\bigcdot{\mathpalette\bigcdot@{.5}}
\newcommand*\bigcdot@[2]{\mathbin{\vcenter{\hbox{\scalebox{#2}{$\m@th#1\bullet$}}}}}
 \newcommand{\be}{\begin{equation}}
\newcommand{\ee}{\end{equation}}
\DeclareMathAlphabet{\pazocal}{OMS}{zplm}{m}{n}
\declaretheoremstyle[
  headfont=\normalfont\scshape,
  numbered=unless unique,
  bodyfont=\normalfont,
  spaceabove=1em,
  spacebelow=1em,
]{exmpstyle}
\newcommand{\p}{\partial}
\newcommand{\defeq}{\vcentcolon=}
\def\ee{\mathsf{e}}
\newtheorem{prop}{Proposition}
\newtheorem{lemma}{Lemma}
\theoremstyle{definition}
\newtheorem{theorem}{Theorem}
\newtheorem{corollary}{Corollary}[theorem]
\title{Explicit Consumption Functions with Borrowing Constraints: a Continuous Time Approach\thanks{I would like to thank Nghia Phung Trong for excellent research assistance. I would also like to thank Jeanne Commault for their comments and suggestions.}}
\author[1]{Jordan Roulleau-Pasdeloup}
\affil[1]{Department of Economics, National University of Singapore}
\date{\today}
\begin{document}
\begin{titlepage}
\maketitle
\thispagestyle{empty}

\abstract{There is no known explicit global closed form solution for the standard income fluctuation problem with a borrowing constraint and where wealth accumulates with a constant interest rate $r$. Using a continuous time formulation, I derive an explicit global closed form solution for the case $r=0$ using the Lambert W function. For the case $r>0$, I derive an explicit global closed form approximation that is valid for $r\sim 0$. I then use these to derive explicit expressions for the marginal propensity to consume out of assets and permanent income. I show that the cross-derivative between the two is strictly positive: the consumption consumption is supermodular.}
\vspace{.5cm}\\
\noindent{\bfseries JEL Codes:C61, C65, D14, E21} \\
\noindent{\bfseries Keywords: Optimal savings, consumption function, marginal propensity to consume} \\
\end{titlepage}
\setstretch{1.5}
\setlength{\parskip}{1em}

\section{Introduction}

In their seminal textbook, \cite{Ljungqvist2018recursive} call the standard savings problem the "Common ancestor" of modern macroeconomics. Another popular name for the exact same problem is an "income fluctuation problem". These have been around at least since \cite{Ramsey1928mathematical} and have been studied using a host of different perspectives. Perhaps the most notable one is \cite{Friedman1957theory}'s "Permanent Income Hypothesis", which features stochastic income. It quickly became apparent that, despite the simple structure of the model, it was not possible to solve the model in closed form for standard utility preferences. Some progress was made for non-standard utility function that produced a certainty equivalence result \textemdash see \cite{Hall1978stochastic}, but most of the progress was brought about by computational advances that permitted a numerical solution to these models. In fact, \cite{Zeldes1989optimal} opens his paper by stating that "\textit{No one has derived closed-form solutions for consumption with stochastic labor
income and constant relative risk aversion utility}"

Following the literature, let us define the consumption function as the optimal level of consumption given initial assets. The main finding from \cite{Zeldes1989optimal} is that, instead of the \textit{linear} consumption function that one gets under certainty equivalence, the consumption function without certainty equivalence is \textit{concave}: a marginal increase in assets at the low end of the asset distribution has much more effects on consumption than on the high end of it. This echoes a similar property that obtains in a savings problem with constant income but with a liquidity constraint, such as the one studied in \cite{Schechtman1977some}. Much like the setup studied in \cite{Zeldes1989optimal}, \cite{Deaton1989saving} concludes that "\textit{there is little hope of recovering closed form solutions}". 

Given the unavailability of a closed form solution for the consumption function under stochastic income and/or liquidity constraints, the focus turned to proving theoretically the \textit{properties} of this consumption function. We know from \cite{Schechtman1976income} that the consumption function has to be \textit{increasing} in assets. We know from \cite{Carroll1996concavity} that, for a broad class of utility functions, the consumption function has to be \textit{concave}. We know from \cite{Benhabib2015wealth} that the consumption function is \textit{asymptotically linear}.\footnote{See also \cite{Gouin2023pareto}.}

These early results have since been extended along many dimensions. What has eluded researchers from \cite{Schechtman1976income} onwards is an \textit{explicit closed-form} solution for the consumption function. Using a continuous time formulation of the income fluctuation problem, \cite{Park2006analytical}, \cite{Jolm2018consumption} and \cite{Fischer2024concave} obtained \textit{implicit} expressions for the consumption function.\footnote{While the analysis in \cite{Park2006analytical} is limited to CRRA utility functions, \cite{Jolm2018consumption} considers a wider class of HARA utility functions. See \cite{Merton1975optimum} for the classical reference on the classification of utility functions. \cite{Fischer2024concave} considers a setup with a specific kind of risk - spanned income risk.} More than four decades ago, \cite{Helpman1981optimal} also obtained implicit expressions for the whole path of consumption (which nests the consumption function as a special case) in a version of \cite{Schechtman1977some}'s model where the net interest rate is $r=0$. In this paper, I consider a version of \cite{Helpman1981optimal}'s framework with a generic time period $\Delta$. For $\Delta>0$, the consumption function is piece-wise linear \textemdash this nests the case of $\Delta=1$ which is the original focus in \cite{Helpman1981optimal}. However, as $\Delta\to 0$, the model converges to the continuous time formulation considered in \cite{Park2006analytical}. For the case of $r=0$, I show that one can can use the continuous time version of \cite{Helpman1981optimal}'s method to obtain an \textit{explicit closed-form} solution for the consumption function. 

The solution revolves around the fact that there exists a one-to-one mapping between initial assets and the time it takes to decumulate assets. This map takes the form of a transcendental equation and I show that it can be solved explicitly if we use the Lambert W function\textemdash more specifically, the second branch $W_{-1}(\cdot)$ of the Lambert function, see \cite{Mezo2022lambert} for a textbook treatment. I show next that this solution strategy does not generalize to the case where $r>0$. However, I obtain a global closed form approximation that is valid for $r\sim 0$. This contrasts with the local solutions for low/arbitrarily high assets obtained in \cite{Achdou2022income}. 

The closed form expression for the consumption function explicitly maps both permanent income and assets to current consumption. I leverage this to derive a full characterization of its properties. Let us denote with $c^\ast(a;y)$ the optimal consumption function that takes initial assets $a$ and permanent income $y$ as arguments. I derive an explicit expression for both the Jacobian vector and the Hessian matrix. These imply that the consumption is increasing and concave in both initial assets and permanent income.
In line with the result obtained in \cite{Commault2025Permanent_income} with a Life-Cycle model without borrowing constraints, I show that the MPC out of permanent income is \textit{increasing} in assets. This is somewhat counter-intuitive as high MPCs are usually thought to be a characteristic of poor agents. In addition, given that the consumption function is twice continuously differentiable, it follows that the MPC out of assets is also increasing in permanent income.

\textbf{Related literature}\textemdash 
As previously stated, early results on the income fluctuation problem with liquidity constraints have been extended along many dimensions. Studying the policy function iteration algorithm usually implemented to solve income fluctuation problems through the lenses of operator theory, \cite{Li2014solving} consider a model with unbounded rewards. \cite{Ma2020income} consider a setup where returns on assets, non-financial income and the discount factor are all stochastic. \cite{Ma2021theory} prove in a general setting that homothetic utility functions are asymptotially linear and analytically characterize the asymptotic MPC. \cite{Ma2022asymptotic} further provide conditions on marginal utility for that property to emerge. In a related contribution using a life-cycle model with finite horizon, \cite{commault2022exact} proves that increasing uncertainty increases consumption growth due to a precautionary motive. 

This paper is also related to a series of papers that uses Constant Absolute Risk Aversion (CARA)-type utility functions to solve income fluctuation problems in closed form. The seminal paper in that regard is \cite{Caballero1990consumption}. The setup in this paper has been extended to a \cite{Bewley1983difficulty}-type model in \cite{Wang2003caballero}. In a similar vein, \cite{Toda2017huggett} studied multiple equilibria in \cite{Huggett1993risk}-type economies. This type of preferences has further been used in \cite{Acharya2023optimal} to study a Heterogenous Agents New Keynesian model. 

Finally, this paper is also related to \cite{Lee2025wealth}, where the author uses a clever specification for the stochastic process governing labor income in order to obtain a closed form solution for the optimal consumption path. 

\section{A Standard Income Fluctuation Problem}

In order to make the transition to continuous time, it will be useful to define a time period $\Delta$ that I will make arbitrarily small later on. With this in mind, I assume that there is a single agent choosing sequences $c(t),a(t)\geq 0$ for $t=0,\Delta,2\Delta,\dots$ in order to maximize the following objective:
\begin{align*}
\sum_{t} &\left(\frac{1}{1+\rho\Delta}\right)^{t/\Delta}\Delta u(c(t))   
\end{align*}
where $\rho \in (0,1)$ is the subjective discount factor and $u(\cdot)$ is twice continuously differentiable, strictly increasing and concave on $[0,+\infty)$. I assume a standard CRRA utility function 
\begin{align*}
u(c(t)) \defeq \frac{c(t)^{1-\gamma}}{1-\gamma},    
\end{align*}
where $\gamma>0$ is the coefficient of relative risk aversion. This maximization is subject to the following sequence of constraints:
\begin{align}
\label{eq:constr_Delta}
a(t+\Delta) &= (1+r\Delta)a(t) + \Delta(y-c(t))\\
0&\leq c(t)\leq a(t)
\end{align}
where $a(0)=a>0$ is given and $y\geq 0$ is a constant stream of income. Following the literature, I assmue that the impatience condition $\rho>r$ holds throughout the paper. As $\Delta \rightarrow 0$, the continuous-time representation of this problem is
\begin{align}
\max_{c(t)}\ \int_{0}^{\infty} e^{-\rho t}&u(c(t))dt \label{eq:obj_cont}\\
\text{s.t} \quad \frac{da(t)}{dt} &= ra(t) + y - c(t) \label{eq:constr_cont} \\
0&\leq c(t)\leq a(t) \label{eq:borrowing constraint}
\end{align}

A well-known property of this class of models is that, given an initial asset $a$, it will be optimal for the consumer to deplete her assets after a finite time duration \textemdash see \cite{Park2006analytical} and more recently \cite{Achdou2022income}. 

Now following \cite{Helpman1981optimal}, let us define an implicit sequence $\mu(T)$ for $T=0,\Delta,2\Delta,\dots$ such that
\begin{align}
\mu(T) + \frac{\Delta y-\mu(T-\Delta)}{1 + r\Delta}  &= \left(\frac{1 + r\Delta}{1+\rho\Delta}\right)^{-\frac{T}{\gamma\Delta}}\Delta y 
\label{eq:implicit_mu}
\end{align}
for $T>0$ with $\mu(0)=0$ given. With $\Delta=1$, this replicates \cite{Helpman1981optimal}'s construction for a discrete time interval $t$. This implicit sequence has the following interpretation: if initial assets are such that $\mu(T-\Delta)<a\leq \mu(T)$, then it means that the the consumer completely depletes her income in $T$ periods. After that, she optimally consumes her income $y$ every period.

Using that implicit sequence and focusing on the case where $r=0$, \cite{Helpman1981optimal} shows that the optimal consumption is piece-wise linear and provides an \textit{implicit} expression for it. I give a graphical illustration in Figure \ref{fig:ca_discrete} below, where the values of $\mu(T)$ are indicated with triangles. For reference, I plot the consumption function under the assumption of no borrowing constraints alongside. Given that $r$ is the real interest rate in that framework, I use a low value of $r=0.01$, which is roughly in line with the average real interest rate in the U.S in 2025. For such a low value of the real interest rate, there is a sizable gap between the consumption with and without borrowing constraints. As $a\to\infty$ however, the results in \cite{Benhabib2015wealth} guarantee that the two will converge to the same consumption function which will be linear in initial assets. 
\begin{figure}[ht]
	\centering
	\bigskip
	\caption{Consumption functions in discrete time}\label{fig:AS_AD_US}
	\vspace*{-.5em}
	{\small
\includegraphics[width=1\textwidth]{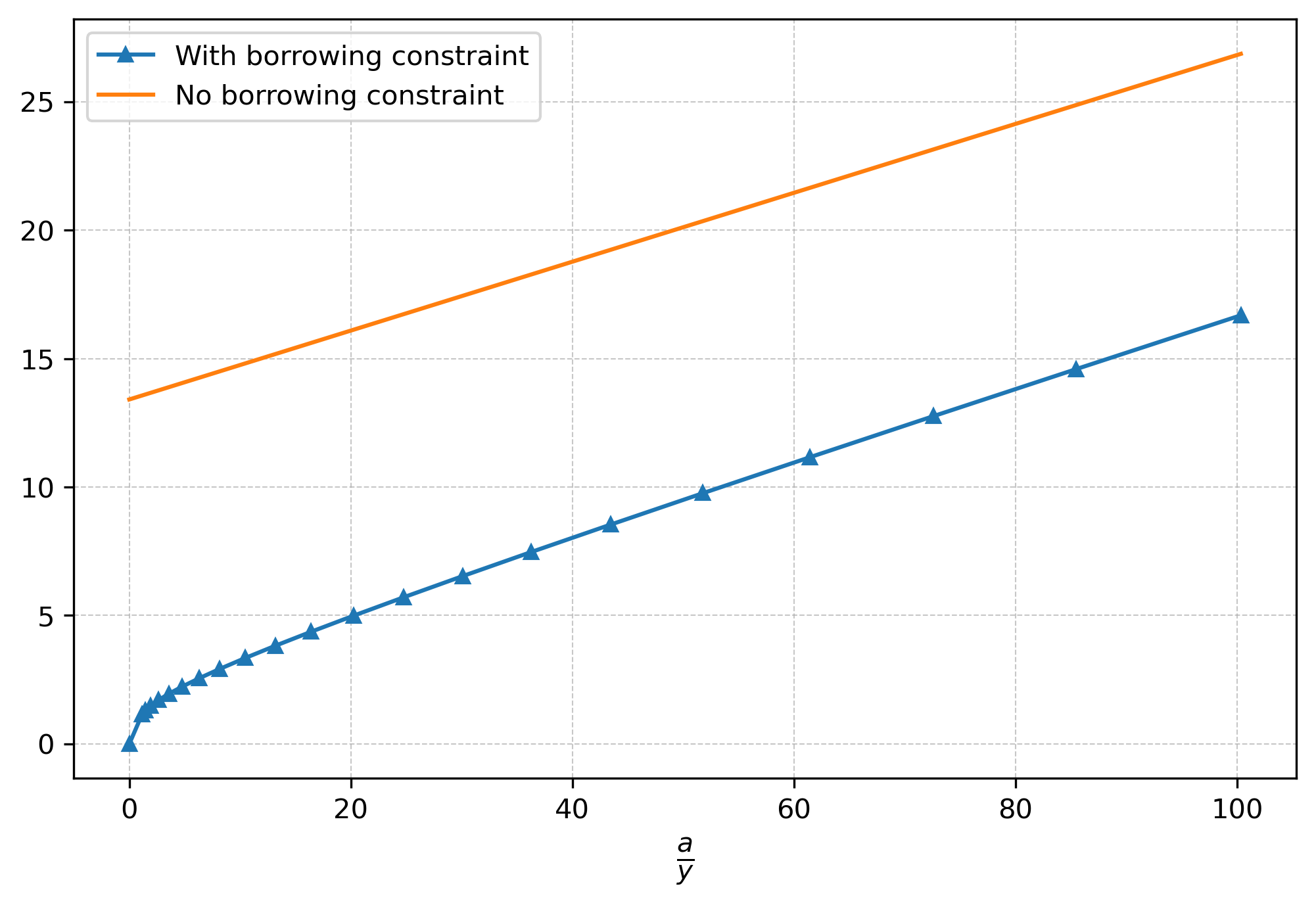}
	}
	\begin{minipage}{0.9\linewidth}
	{\vspace*{-.9em}
 	 \footnotesize\emph{Notes:  We normalize both consumption and initial assets with permanent income. For this figure, I use the following parameter values: $r=0.01, \gamma = 0.5, \rho = 0.08$ and $y=3$.} 
	 }
	\end{minipage}
    \label{fig:ca_discrete}
\end{figure}

Against this backdrop, the main insight is the following: if one makes time continuous by assuming that $\Delta\to 0$, then one can conjecture that the consumption function goes from piecewise linear to a smooth function. I now prove formally that this is the case and characterize this function. Using the functional form for the utility function, I can rewrite equation \eqref{eq:implicit_mu} as follows:
\begin{align*}
\frac{\mu(T) - \mu(T-\Delta)}{\Delta} + r\mu(T) + y &= \left(\frac{1 + r\Delta}{1+\rho\Delta}\right)^{-\frac{T}{\gamma\Delta}} y (1 +r\Delta)\end{align*}
Now considering the limit as $\Delta\to 0$ as well as the properties of the exponential, I get:
\begin{align}
\label{eq:ODE_mu}
\frac{d\mu(T)}{dT} + r\mu(T) + y &= e^{\frac{\rho-r}{\gamma}T}y 
\end{align}
which is a standard Ordinary Differential Equation (ODE), which can be solved using standard methods. For the sake of the argument, let us assume that initial assets are given by $a=\mu(T)$. Then, I have:
\begin{align}
a = \mu(T) = \frac{\gamma y}{r(\gamma-1)+\rho}e^{\frac{(\rho-r) T}{\gamma}} - \frac{y}{r} + e^{-rT} \left( \frac{y}{r} - \frac{\gamma y}{r(\gamma-1)+\rho}  \right)    
\label{eq:a_f(T)}
\end{align}
In that case, remember that it takes exactly $T$ periods for the consumer to run down her initial assets. Under these assumptions, I can characterize the optimal consumption path starting from time $t=0$ as follows:
\begin{theorem}
\label{thm:implicit}
The consumption function given by 
\begin{align}
c^\ast(t)\defeq
\begin{cases}
y\cdot e^{(\rho-r)\frac{T-t}{\gamma}}\quad \text{for}\quad t\leq T\\
y\ \ \hphantom{\cdot e^{(\rho-r)\frac{T-t}{\gamma}}}\quad \text{for}\quad t> T
\end{cases} 
\label{eq:cstar_implicit}
\end{align}
\label{cstar}
maximizes the objective \eqref{eq:obj_cont} subject to the constraint \eqref{eq:constr_cont}.   
\end{theorem}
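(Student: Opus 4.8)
The plan is a verification argument: first produce the asset path $a^\ast(t)$ generated by the candidate $c^\ast(t)$ and check that it is feasible, and then show, via concavity of $u$ plus one integration by parts, that $c^\ast$ weakly dominates every feasible path. So existence and optimality of a maximizer are established simultaneously.

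\emph{Feasibility.} Substituting $c^\ast$ into \eqref{eq:constr_cont} and imposing asset depletion $a^\ast(T)=0$ determines $a^\ast$ on $[0,T]$; for $t>T$ set $a^\ast(t)\equiv 0$, which is consistent since then $\dot a^\ast = r\cdot 0 + y - y = 0$. Changing variables to $s=T-t$ turns the backward equation for $a^\ast$ into precisely the ODE \eqref{eq:ODE_mu} with initial value $0$, so $a^\ast(t)=\mu(T-t)$ on $[0,T]$; in particular $a^\ast(0)=\mu(T)=a$ by \eqref{eq:a_f(T)}, and $a^\ast(t)=\mu(T-t)\ge 0$ because $\mu(s)=\int_0^s e^{-r(s-\tau)}\bigl(e^{(\rho-r)\tau/\gamma}-1\bigr)y\,d\tau\ge 0$ (here $\rho>r$ is used). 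Also $c^\ast$ is continuous at $T$ and on $[0,T]$ coincides with the unconstrained CRRA Euler path $c^\ast(t)=c^\ast(0)e^{(r-\rho)t/\gamma}$. Hence $(c^\ast,a^\ast)$ satisfies \eqref{eq:constr_cont}, the borrowing constraint, and runs assets down exactly at $T$.

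\emph{Optimality.} Let $(c,a)$ be any feasible path with $a(0)=a$, $a(\cdot)\ge 0$, $c(\cdot)\ge 0$. Put $b\defeq a-a^\ast$, so $b(0)=0$ and, subtracting the two copies of \eqref{eq:constr_cont}, $c-c^\ast=rb-\dot b$. Concavity of $u$ gives the pointwise bound $e^{-\rho t}\bigl(u(c)-u(c^\ast)\bigr)\le q(t)\,(c-c^\ast)$ with $q(t)\defeq e^{-\rho t}u'(c^\ast(t))$. A short computation shows $q(t)=y^{-\gamma}e^{-(\rho-r)T}e^{-rt}$ on $(0,T)$, hence $rq+\dot q=0$ there (this is just the statement that $c^\ast$ obeys the unconstrained Euler equation), while $q(t)=y^{-\gamma}e^{-\rho t}$ on $(T,\infty)$, hence $rq+\dot q=(r-\rho)q$ there; and $q$ is continuous at $T$. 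Integrating the bound over $[0,N]$ and integrating $\int_0^N q\,\dot b$ by parts, using $b(0)=0$,
\begin{align*}
\int_0^N e^{-\rho t}\bigl(u(c)-u(c^\ast)\bigr)\,dt \;\le\; \int_0^N (rq+\dot q)\,b\,dt \;-\; q(N)b(N) \;=\; \int_T^N (r-\rho)\,q\,b\,dt \;-\; q(N)b(N).
\end{align*}
For $t>T$ one has $b=a-0=a\ge 0$ and $r-\rho<0$, so the integral on the right is $\le 0$; and since $c\ge 0$ forces $\dot a\le ra+y$, the feasible $a(N)$ grows at most exponentially at rate $r$ (linearly if $r=0$), while $q(N)=y^{-\gamma}e^{-\rho N}$, so $\rho>r$ gives $q(N)b(N)\to 0$. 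Letting $N\to\infty$ yields $\int_0^\infty e^{-\rho t}u(c)\,dt\le\int_0^\infty e^{-\rho t}u(c^\ast)\,dt$. Since $(c,a)$ was arbitrary and $(c^\ast,a^\ast)$ is feasible, $c^\ast$ maximizes \eqref{eq:obj_cont}; strict concavity of $u$ would further give uniqueness.

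\emph{Where the difficulty sits.} The only genuinely delicate step is the transversality argument — ruling out feasible paths that accumulate assets fast enough to keep $q(N)b(N)$ from vanishing — which is closed off by the a priori growth bound $\dot a\le ra+y$ together with the impatience assumption $\rho>r$. The rest is bookkeeping: the identification $a^\ast(0)=a$ and the sign $a^\ast\ge 0$ follow once the time-reversed asset equation is recognized as \eqref{eq:ODE_mu}, and the cancellation $rq+\dot q=0$ on $[0,T]$ is exactly the unconstrained Euler condition. For $\gamma\ge 1$, where $u$ may equal $-\infty$ on paths touching $c=0$, the inequality only becomes easier, so no extra care is needed.
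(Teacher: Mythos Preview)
Your argument is correct. Both you and the paper run a concavity-based verification, but the executions differ. The paper works in integral form throughout: it first proves a separate finiteness lemma, then establishes feasibility of $c^\ast$ via a chain of integral identities (eventually recovering $a(t)=\mu(T-t)$), and closes the comparison by a somewhat laborious double-integral / change-of-order-of-integration step to bound the $[T,\infty)$ piece. You instead run the standard Mangasarian-type sufficiency: identify the discounted marginal utility $q(t)=e^{-\rho t}u'(c^\ast(t))$ as the costate, use the Euler identity $rq+\dot q=0$ on $[0,T]$ to kill that block after one integration by parts, and sign the $[T,\infty)$ residual directly from $b=a\ge 0$ and $r<\rho$. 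Your feasibility step (recognizing the time-reversed asset equation as \eqref{eq:ODE_mu}) is also shorter than the paper's Lemma~2. Two minor remarks: (i) for $N>T$ both right-hand terms are already $\le 0$, so the transversality limit $q(N)b(N)\to 0$ is not actually needed for the inequality---it would only matter if $b$ could be negative past $T$; (ii) since $q$ has a kink at $T$, the integration by parts is really done on $[0,T]$ and $[T,N]$ separately, with the boundary terms at $T$ cancelling by the continuity of $q$ you noted. What the paper's longer route buys is a closer parallel to Helpman's original discrete-time argument; what yours buys is brevity and a clear link to optimal-control sufficiency.
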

\begin{proof}
See Appendix \ref{app:Thm1}.    
\end{proof}

At this point, what we have is an expression for consumption as a function of calendar time and the time $T$ that it takes to run down assets. In order to be consistent with the literature, what we really want is to express consumption \textit{explicitly} as a function of \textit{initial assets}. Indeed, the consumption function that is the object of interest in the literature cited in the introduction is essentially equation \eqref{eq:cstar_implicit} with $t=0$: if I start with assets $a$, what is my optimal consumption within the same period?

Given equations \eqref{eq:a_f(T)} and \eqref{eq:cstar_implicit}, in order to compute such an object, what we need to do is to express $T$ as a function of $a$. For future reference, let us denote this function $T:= h(a;y)$, where $y$ appears explicitly as a parameter. Before giving its expression, I show in the following proposition that it exists, is unique, increasing, and concave. 

\begin{prop}
\label{prop:HL}
Assume that the function $\mu$ is given by equation \eqref{eq:a_f(T)} with the assumption that $\rho>r$. Then it follows that
\begin{enumerate}
    \item it is smooth (infinitely differentiable)
    \item it is bijective (one-to-one and onto) and convex from $\mathbb{R}^+$ to $\mathbb{R}^+$
    \item it has an invertible derivative
\end{enumerate}
It follows that, according to the Hadamard-L\'evy theorem, 
\begin{align*}
\tau=\mu^{-1}(a)\defeq h(a;y)    
\end{align*}exists and is unique, strictly increasing and concave in $a$.      
\end{prop}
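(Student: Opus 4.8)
The plan is to verify the three bulleted properties of $\mu$ by direct computation and then read the conclusion off the Hadamard--L\'evy theorem. Throughout we are in the case $r>0$ (the formula \eqref{eq:a_f(T)} requires it). Write $\beta \defeq (\rho-r)/\gamma$ and $A \defeq \gamma y/(r(\gamma-1)+\rho)$, so that \eqref{eq:a_f(T)} reads $\mu(T) = A e^{\beta T} - y/r + (y/r-A)e^{-rT}$. The first step is to pin down the signs of these constants: since $\gamma>0$, $y>0$ and, by the impatience condition, $\rho>r$, the denominator $r(\gamma-1)+\rho = r\gamma+(\rho-r)$ is positive, so $A>0$ and $\beta>0$, and moreover $y/r-A = y(\rho-r)/(r(r(\gamma-1)+\rho))>0$. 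Statement~1 is then immediate: $\mu$ is a finite combination of exponentials in $T$ plus a constant, hence $C^\infty$ on $\mathbb{R}$ and in particular on $\mathbb{R}^+$.

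For statements 2 and 3 I would differentiate. The point worth isolating is the algebraic identity $A\beta = y - rA =: K$ --- both sides equal $y(\rho-r)/(r(\gamma-1)+\rho)$ --- since it is exactly here that $\rho>r$ delivers $K>0$. With this,
\[ \mu'(T) = K\big(e^{\beta T}-e^{-rT}\big), \qquad \mu''(T) = K\big(\beta e^{\beta T}+r e^{-rT}\big). \]
Thus $\mu''>0$ everywhere, giving strict convexity; and, as $\beta+r>0$, the only zero of $\mu'$ is at $T=0$ (the minimum of $\mu$, with $\mu(0)=0$), so $\mu'>0$ on $(0,\infty)$ and $\mu$ is strictly increasing on $\mathbb{R}^+$. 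Combined with $\mu(0)=0$ and $\mu(T)\to+\infty$ as $T\to\infty$ (the term $Ae^{\beta T}$ dominates), the intermediate value theorem shows $\mu$ is a bijection $\mathbb{R}^+\to\mathbb{R}^+$, which is statement~2; and $\mu'(T)\ne 0$ for all $T>0$ gives statement~3 on the interior.

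To finish, I would invoke Hadamard--L\'evy --- in one dimension this amounts to the inverse function theorem together with global monotonicity --- applied to $\mu$ on the open ray $(0,\infty)$, where all three hypotheses hold. It produces a $C^\infty$ inverse $h=\mu^{-1}\colon(0,\infty)\to(0,\infty)$, which extends continuously and monotonically to $[0,\infty)$ with $h(0)=0$; uniqueness and strict monotonicity of $h$ are inherited from $\mu$. For concavity I would argue directly rather than differentiate $h$: for $a_1,a_2\ge 0$ and $\lambda\in[0,1]$, writing $T_i=h(a_i)$, convexity of $\mu$ gives $\mu(\lambda T_1+(1-\lambda)T_2)\le \lambda a_1+(1-\lambda)a_2$, and applying the increasing map $h$ to both sides yields $\lambda h(a_1)+(1-\lambda)h(a_2)\le h(\lambda a_1+(1-\lambda)a_2)$; equivalently, on $(0,\infty)$, $h''=-\mu''(h)/\mu'(h)^3<0$.

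The delicate point --- and the one genuine obstacle --- is the left endpoint $T=0$, i.e.\ $a=0$: there $\mu'(0)=0$, so $\mu$ fails to be a local diffeomorphism at the boundary of $\mathbb{R}^+$, and $h$ is not differentiable at $a=0$ (in fact $h'(0^+)=+\infty$, a bona fide vertical tangent --- perfectly compatible with concavity, as for $a\mapsto\sqrt{a}$). This is why the theorem is applied on the open ray and the value and monotonicity at $0$ recovered by continuity rather than by the inverse function theorem. The only other thing that needs care is the sign bookkeeping for $A$, $y/r-A$ and $K$, which all hinge on the impatience condition $\rho>r$.
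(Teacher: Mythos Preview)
Your proof is correct and follows essentially the same route as the paper's: compute $\mu'$ and $\mu''$ explicitly, check the signs using $\rho>r$, deduce bijectivity and convexity, invoke Hadamard--L\'evy, and obtain concavity of $h$ from $h''=-\mu''(h)/\mu'(h)^3$. Your version is in fact a little more careful than the paper's in two respects: you isolate the identity $A\beta=y-rA$ that makes $\mu'$ and $\mu''$ transparent, and you correctly flag that $\mu'(0)=0$ forces one to apply the inverse function theorem only on $(0,\infty)$ and recover the endpoint by continuity --- a point the paper's Appendix~\ref{app:proof_HL} glosses over when it asserts $\mu'(T)>0$ on $\mathbb{R}^+$.
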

\begin{proof}
See Appendix \ref{app:proof_HL}.  
\end{proof}
Proposition \ref{prop:HL} guarantees that the function $h(a;y)$ exists and exhibits desirable properties, but is not constructive in the sense that it provides no explicit expression for the function itself. Finding such an expression is difficult because it amounts to solving equation \eqref{eq:a_f(T)}, which is transcendental. At this point, it will be useful to first follow \cite{Helpman1981optimal} and focus on the case when the net real interest rate is given by $r=0$.

\subsection{The special case $r=0$}
\label{sec:r0}

Under the assumption of $r=0$, one can use l'H\^opital's rule to show that the solution of equation \eqref{eq:ODE_mu} is given by:
\begin{align*}
a = \mu(T) &=  y\left(\frac{\gamma}{\rho}e^{\frac{\rho T}{\gamma}}-T-\frac{\gamma}{\rho}\right)\\
&= \frac{y}{b}e^{bT}-yT-\frac{y}{b},
\end{align*}
for all $T\geq 0$ and where $b\defeq \rho/\gamma$. In order to express $T$ as an explicit function of $a$, notice that this condition can be rewritten as follows:
\begin{align}
\frac{y}{b}e^{bT} &= yT + a + \frac{y}{b}\\
&= yT + c
\label{eq:transc_abc}
\end{align}
where the constant $c$ is given by $c = a+y/b$. This is once again a transcendental equation and as a result, it cannot be solved using elementary functions. However, it can be solved using the Lambert $W(\cdot)$ function\textemdash see \cite{Mezo2022lambert} for a comprehensive treatment of this function and its properties. Let us consider the following change of variables $z\defeq -bT - cb/y$. Using this change of variables, equation \eqref{eq:transc_abc} can be rewritten as 
\begin{align}
ze^z=\alpha\Rightarrow z=W_{-1}(\alpha)\ ,\ \alpha = -e^{-\frac{b}{y}\left(a+\frac{\gamma y}{\rho}\right)},  
\label{eq:W_alpha}
\end{align}
where $W_{-1}(\cdot)$ is the second branch of the Lambert $W$ function, which is well defined here given that $\alpha$ takes on values in the interval $(-1/e,0)$ for $a\in (0,+\infty]$. I am now able to state the optimal consumption explicitly as a function of assets. This is described in the following theorem. 

\begin{theorem}
\label{thm:explicit}
The optimal consumption function is given by:
\begin{align}
c^{\ast}(a,t;y) =
\begin{cases}
y\cdot e^{\rho\frac{h(a;y)-t}{\gamma}}\quad\text{for}\quad t\leq T\\    
y\ \ \hphantom{\cdot e^{\rho\frac{h(a;y)-t}{\gamma}}\quad}\text{for}\quad t>T
\end{cases}
\end{align}
where the function $h(a;y)$ is given by
\begin{align*}
h(a;y) &\defeq -\frac{a+\frac{\gamma y}{\rho}}{y}-\frac{\gamma}{\rho} W_{-1}\left(f(a;y)\right)\\
f(a;y) &= -e^{-\frac{b}{y}\left(a+\frac{\gamma y}{\rho}\right)}
\end{align*}
\end{theorem}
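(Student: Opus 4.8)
The plan is to combine Theorem~\ref{thm:implicit} with an explicit inversion of the map $T\mapsto\mu(T)$ in the case $r=0$. Theorem~\ref{thm:implicit} already establishes that the optimal consumption path is $c^\ast(t)=y\,e^{(\rho-r)(T-t)/\gamma}$ for $t\le T$ and $c^\ast(t)=y$ for $t>T$, where $T$ is the depletion time associated with initial assets through $a=\mu(T)$. Setting $r=0$ and invoking the l'H\^opital computation already carried out in the text, this path becomes $c^\ast(t)=y\,e^{\rho(T-t)/\gamma}$ for $t\le T$, with $a=\mu(T)=\tfrac{y}{b}e^{bT}-yT-\tfrac{y}{b}$ and $b\defeq\rho/\gamma$. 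Hence the entire content of the theorem reduces to showing that the unique $T\ge 0$ solving this relation equals $h(a;y)$; once that identity is in hand, substituting $T=h(a;y)$ into the two-piece path yields $c^\ast(a,t;y)$ as stated.

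First I would reproduce the change of variables from equation~\eqref{eq:W_alpha}: writing $c=a+y/b$ and $z\defeq -bT-cb/y$, the relation $\tfrac{y}{b}e^{bT}=yT+c$ becomes $ze^z=\alpha$ with $\alpha=-e^{-(b/y)(a+\gamma y/\rho)}$, using $\gamma y/\rho=y/b$. Next I would pin down the correct branch. Since $(b/y)(a+y/b)=ab/y+1$, one has $\alpha=-e^{-1}e^{-ab/y}$, so for $a\in(0,\infty)$ the argument $\alpha$ lies in $(-1/e,0)$, the interval on which both $W_0$ and $W_{-1}$ are real-valued; moreover $z=-bT-(ab/y+1)\le -1$ because $T\ge 0$ and $a>0$. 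As $W_{-1}$ is precisely the branch whose range is $(-\infty,-1]$, the relevant solution is $z=W_{-1}(\alpha)$ (uniqueness of such a $T$ is already guaranteed by Proposition~\ref{prop:HL}, so no separate uniqueness argument is needed here).

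Finally I would solve back for $T$: from $-bT-ab/y-1=W_{-1}(\alpha)$ we get $T=-\tfrac{1}{b}-\tfrac{a}{y}-\tfrac{1}{b}W_{-1}(\alpha)$, and substituting $b=\rho/\gamma$ together with $\tfrac{\gamma}{\rho}+\tfrac{a}{y}=\tfrac{a+\gamma y/\rho}{y}$ gives $T=-\tfrac{a+\gamma y/\rho}{y}-\tfrac{\gamma}{\rho}W_{-1}(f(a;y))$ with $f(a;y)=\alpha$, which is exactly the claimed $h(a;y)$. Plugging this expression for $T$ into $c^\ast(t)=y\,e^{\rho(T-t)/\gamma}$ on $\{t\le T\}$ and $c^\ast(t)=y$ on $\{t>T\}$ completes the proof.

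The main obstacle, and the only genuinely non-mechanical step, is the branch-selection argument: one must be certain that the economically relevant root sits on the lower branch $W_{-1}$ rather than on the principal branch $W_0$. The sign bookkeeping in the change of variables — specifically that $\alpha<0$ and that $z\le -1$ for every admissible $a$ — is what forces this choice, and it is worth recording the boundary case $a=0\Rightarrow T=0\Rightarrow z=-1,\ \alpha=-1/e,\ W_{-1}(-1/e)=-1$ as a consistency check, together with the observation that $z$ is monotone in $a$ so it never leaves the range of $W_{-1}$ as $a$ ranges over $(0,\infty)$.
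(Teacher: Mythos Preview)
Your proposal is correct and follows essentially the same route as the paper: identify $z$ via the change of variables in equation~\eqref{eq:W_alpha}, invert to recover $T=h(a;y)$, and substitute into the implicit consumption path from Theorem~\ref{thm:implicit}. Your treatment is in fact more thorough than the paper's own proof, which leaves the branch-selection argument and the back-substitution largely implicit; your verification that $z\le -1$ forces the $W_{-1}$ branch and your boundary check at $a=0$ are welcome additions.
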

\begin{proof}
Given equation \eqref{eq:W_alpha}, we have $z=W_{-1}
(f(a;y))$, where $f(\cdot)$ depends on the structural parameters through the constants $b,c$ and $d$. Next, reverse the change of variables and express $T$ as a function of $z$, which gives a function $T=h(a;y)$. Finally, plug this function for $T$ in \eqref{eq:cstar_implicit}. 
\end{proof}
This immediately gives the following corollary which describes the consumption function at time $t=0$.

\begin{corollary}
\label{cor:explicit}
The consumption function at time $t=0$ is given by
\begin{align*}
c^{\ast}(a;y) \defeq c^{\ast}(a,0;y) = y\cdot e^{\rho\frac{h(a;y)}{\gamma}} ,\end{align*}
where $h(a;y)$ is defined as in Theorem \ref{thm:explicit}. 
\end{corollary}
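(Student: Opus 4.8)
The plan is to obtain the corollary as an immediate specialization of Theorem \ref{thm:explicit} to the calendar date $t=0$, with one small verification to justify that this date lies in the relevant branch of the piecewise definition. Concretely, I would write $c^\ast(a;y)\defeq c^\ast(a,0;y)$, substitute $t=0$ into the formula of Theorem \ref{thm:explicit}, and simplify $e^{\rho(h(a;y)-0)/\gamma}=e^{\rho h(a;y)/\gamma}$ to recover the claimed expression.

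The only point that is not purely mechanical is checking that, at $t=0$, the first case $t\le T$ applies, i.e. that $T=h(a;y)\ge 0$ for every admissible $a>0$. I would argue this from Proposition \ref{prop:HL}: the map $\mu$ in \eqref{eq:a_f(T)} (equivalently, here, its $r=0$ specialization) satisfies $\mu(0)=0$ by construction of the implicit sequence, and Proposition \ref{prop:HL} establishes that $\mu$ is a strictly increasing bijection from $\mathbb{R}^+$ onto $\mathbb{R}^+$. Hence for any $a>0=\mu(0)$ we have $h(a;y)=\mu^{-1}(a)>0\ge t=0$, so the consumer has not yet exhausted her assets at the initial instant and the upper branch of $c^\ast(a,t;y)$ is the one that is active. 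This makes the substitution $t=0$ legitimate and yields $c^\ast(a;y)=y\cdot e^{\rho h(a;y)/\gamma}$ with $h$ as in Theorem \ref{thm:explicit}.

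There is essentially no hard part here; the corollary is a notational restatement of Theorem \ref{thm:explicit}. If anything, the ``obstacle'' is simply being careful that the branch condition $t\le T$ is met at $t=0$, which the strict positivity of $h(a;y)$ on $(0,\infty)$ secures. One could alternatively note that $h(a;y)=0$ exactly when $a=0$, in which case both branches agree and give $c^\ast(0;y)=y$, so the stated formula extends continuously to the boundary $a=0$ as well.
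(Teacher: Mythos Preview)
Your proposal is correct and matches the paper's approach: the paper simply states that Corollary \ref{cor:explicit} follows immediately from Theorem \ref{thm:explicit} by setting $t=0$, without further argument. Your additional check that $h(a;y)>0$ for $a>0$ (so that the first branch applies at $t=0$) is a nice bit of rigor that the paper leaves implicit.
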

Note that this is a non-linear expression that is a \textit{global} solution to the income fluctuation problem with a borrowing constraint in continuous time. As a result, this extends the local closed form expressions that have been proposed in the literature. As an example, \cite{Achdou2022income} give expressions for $c^{\ast}(a)$ that are valid for $a\sim 0$ as well as $a\to\infty$. To be fair, these expressions are derived for the general case where $r>0$. The natural question that arises is: can one still provide a global, closed form solution of the income fluctuation problem with a borrowing constraint? I address that issue in the next subsection.

\subsection{The general case: $r>0$}

If one wants to get a corresponding global, closed form solution for the case $r>0$, then one needs to explicitly solve for $T$ as a function of $a$ in equation \eqref{eq:a_f(T)}. However, using the Lambert W function doesn't work in that case. The reason is that, while it can handle an expression that combines a linear and an exponential term, it cannot handle an expression that combines two exponential with different exponents. To the best of my knowledge, inverting a sum of exponentials with different exponents is an open problem.

Given this impossibility, I propose something different here: an expression that is valid for all values of $a$, but for $r\sim 0$. In that sense, it is still global and non-linear but it is now an approximation and not a full-blown solution. The main intuition comes from the fact that :
\begin{align}
a = \mu(T) &=  \frac{\gamma y}{r(\gamma-1)+\rho}e^{\frac{(\rho-r) T}{\gamma}} + \frac{y}{r}(e^{-rT}-1) - \frac{\gamma y}{r(\gamma-1)+\rho}e^{-rT}
\nonumber
\\
&= \frac{\gamma y}{r(\gamma-1)+\rho}e^{\frac{(\rho-r) T}{\gamma}} + \frac{r-\rho}{r(\gamma-1)+\rho}yT-\frac{\gamma y}{r(\gamma-1)+\rho}+o(r),
\label{eq:ar_or2}
\end{align}
where I have used the fact that $e^{-rT} = 1 -rT + o(r)$ for $r\sim 0$. Notice that when $r=0$ this expression exactly nests the one from subsection \ref{sec:r0}. Proceeding in the exact same way as before, one can rewrite equation 
\eqref{eq:ar_or2} as follows:
\begin{align}
\frac{y}{b_r}e^{b_rd_rT} = d_ryT + c_r,
\label{eq:transc_abc_r}
\end{align}
where the coefficients are given by:
\begin{align*}
b_r &= \frac{r(\gamma-1)+\rho}{\gamma } \\ 
c_r &= a+\frac{y}{b_r}+o(r)\\
d_r &= \frac{\rho-r}{r(\gamma-1)+\rho}
\end{align*}
Notice that we have $b_r\to b$ as well as $c_r\to c$ and $d_r\to 1$ as $r\to 0$, where $b,c$ were defined in subsection \ref{sec:r0}. I can now use the same steps as before and use the second branch of the Lambert W function to compute the consumption function in terms of assets. Using the change of variables $z_r\defeq -b_rd_rT-b_rc_r/y$, we can write that
\begin{align*}
z_r = W_{-1}(\alpha_r)\quad,\quad \alpha_r = -e^{-\frac{b_r}{y}(a+\frac{y}{b_r}+o(r))} 
\end{align*}
This naturally leads to the following theorem to characterize the global closed form approximation for the consumption function with $r>0$.
\begin{prop}
\label{prop:gca}
Let the function $h(a;y)$ be defined as in Proposition \ref{prop:HL}. Then the optimal consumption function is given by:
\begin{align*}
c^\ast(a,t;y) =
\begin{cases}
y\cdot e^{\frac{(\rho-r)(h(a;y)-t)}{\gamma}}\quad \text{for}\quad t\leq T\\
y\hphantom{\cdot e^{\frac{(\rho-r)(h(a;y)-t)}{\gamma}}\quad}\ \ \text{for}\quad t> T
\end{cases} 
\end{align*}  where the function $h(a;y)$ is such that
\begin{align*}
h(a;y) &\sim -\frac{1}{d_r\cdot y}\left(a+\frac{y}{b_r}\right)-\frac{1}{b_rd_r} W_{-1}\left(f_r(a,y)\right)\\
f_r(a;y) &= -e^{-\frac{b_r}{y}(a+\frac{y}{b_r})}
\end{align*}
as $r\to 0$. 
\end{prop}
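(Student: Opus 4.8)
The half of the statement concerning the \emph{form} of $c^\ast(a,t;y)$ requires no new work: by Proposition~\ref{prop:HL} the map $\tau=\mu^{-1}(a)=h(a;y)$ is the time it takes to run down assets $a$, and Theorem~\ref{thm:implicit} already shows that the optimal path is $c^\ast(t)=y\,e^{(\rho-r)(T-t)/\gamma}$ for $t\le T$ and $c^\ast(t)=y$ for $t>T$; substituting $T=h(a;y)$ gives exactly the displayed expression, and evaluating at $t=0$ yields the consumption function. So the content to establish is the asymptotic formula for $h(a;y)$. I would obtain it by replaying, with $r$ as the small parameter, the argument of Subsection~\ref{sec:r0}. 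Starting from the exact identity $a=\mu(T)$ in \eqref{eq:a_f(T)}, I substitute $e^{-rT}=1-rT+o(r)$ as $r\to 0$ and collect terms to get \eqref{eq:ar_or2}; isolating the exponential on one side and the affine-in-$T$ part on the other produces the transcendental relation \eqref{eq:transc_abc_r}, i.e. $\frac{y}{b_r}e^{b_rd_rT}=d_ryT+c_r$ with $b_r,c_r,d_r$ as defined there. Since $\rho>r$, for $r$ small enough $b_r=(r(\gamma-1)+\rho)/\gamma>0$ and $d_r=(\rho-r)/(r(\gamma-1)+\rho)>0$, and one checks directly that $b_r\to b$, $c_r\to c$, $d_r\to 1$, so \eqref{eq:transc_abc_r} degenerates to \eqref{eq:transc_abc} in the limit.

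Next I would carry out the change of variables $z_r\defeq -b_rd_rT-b_rc_r/y$, which turns \eqref{eq:transc_abc_r} into the canonical Lambert form $z_re^{z_r}=\alpha_r$ with $\alpha_r=-e^{-\frac{b_r}{y}(a+y/b_r+o(r))}$, exactly as in \eqref{eq:W_alpha}. Before inverting I verify that $\alpha_r\in(-1/e,0)$ for every $a>0$ and all small $r$: the exponent $-\frac{b_r}{y}(a+y/b_r+o(r))$ is strictly negative because $b_r,a>0$, so $\alpha_r\in(-1/e,0)$, which is precisely the interval on which the branch $W_{-1}$ is real and single-valued. The relevant root is the one with $z_r<-1$, since it corresponds to $T>0$ and a binding constraint (i.e. $c^\ast(a;y)<y$), hence the second branch: $z_r=W_{-1}(\alpha_r)$. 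Reversing the substitution gives $T=h(a;y)\sim -\frac{1}{d_ry}(a+\frac{y}{b_r})-\frac{1}{b_rd_r}W_{-1}(f_r(a,y))$ with $f_r(a;y)=-e^{-\frac{b_r}{y}(a+y/b_r)}$, as claimed; plugging this into the path of Theorem~\ref{thm:implicit} closes the argument.

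The step I expect to be the main obstacle is making precise the symbol ``$\sim$'' in the statement: one must justify that solving the \emph{approximate} equation \eqref{eq:transc_abc_r} and then inverting through the highly nonlinear Lambert map actually recovers the true inverse $h=\mu^{-1}$ up to $o(r)$. The clean route is to lean on Proposition~\ref{prop:HL}: $\mu$ is smooth with everywhere-invertible derivative, so $h$ is smooth, and one can compare $h(a;y)$ with the explicit $W_{-1}$-expression by a first-order implicit-function/Taylor argument in the joint variables $(a,r)$, using that $W_{-1}$ is locally Lipschitz on any compact subset of $(-1/e,0)$ (its derivative is $W_{-1}(\alpha)/[\alpha(1+W_{-1}(\alpha))]$, finite away from the branch point). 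Care is needed near $a=0$, where $\alpha_r\uparrow -1/e$ and $W_{-1}'$ blows up through the square-root singularity at the branch point; there the asymptotic is best read pointwise in $a>0$, or uniformly on compact subsets of $(0,\infty)$, rather than uniformly down to $a=0$. Once this consistency is secured, everything else — the rearrangement into \eqref{eq:transc_abc_r}, the change of variables, and the back-substitution — is routine bookkeeping of the constants $b_r,c_r,d_r$.
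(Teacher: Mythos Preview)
Your proposal is correct and mirrors the paper's own derivation almost step for step: the paper obtains \eqref{eq:ar_or2} by the first-order expansion $e^{-rT}=1-rT+o(r)$, rewrites it as \eqref{eq:transc_abc_r}, applies the change of variables $z_r=-b_rd_rT-b_rc_r/y$ to reach the canonical Lambert form, selects the $W_{-1}$ branch, and back-substitutes into Theorem~\ref{thm:implicit}. Your third paragraph, which worries about propagating the $o(r)$ term through the nonlinear Lambert inversion and proposes an implicit-function/Lipschitz argument (with the caveat near the branch point $a\downarrow 0$), actually goes beyond the paper: the paper simply notes that ``the term $o(r)$ contains terms in $T$'' and treats the resulting expression as a heuristic small-$r$ approximation without a formal error bound, so your added rigor is a genuine improvement rather than a deviation.
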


Note that this is only an approximation because the term $o(r)$ contains terms in $T$. In deriving the approximation, I have used the fact that these terms become negligible as $r\sim 0$. In order to visualize how closely the consumption function derived in Proposition \ref{prop:gca} approximates the true consumption function, I solve numerically for the latter using its implicit expression as in \cite{Park2006analytical}. I represent both of them in Figure \ref{fig:imp_exp}. 

\begin{figure}[H]
	\centering
	\bigskip
	\caption{Consumption function in continuous time, small $r$}\label{fig:approx_small_r}
	\vspace*{-.5em}
	{\small
	\includegraphics[width=.9\textwidth]{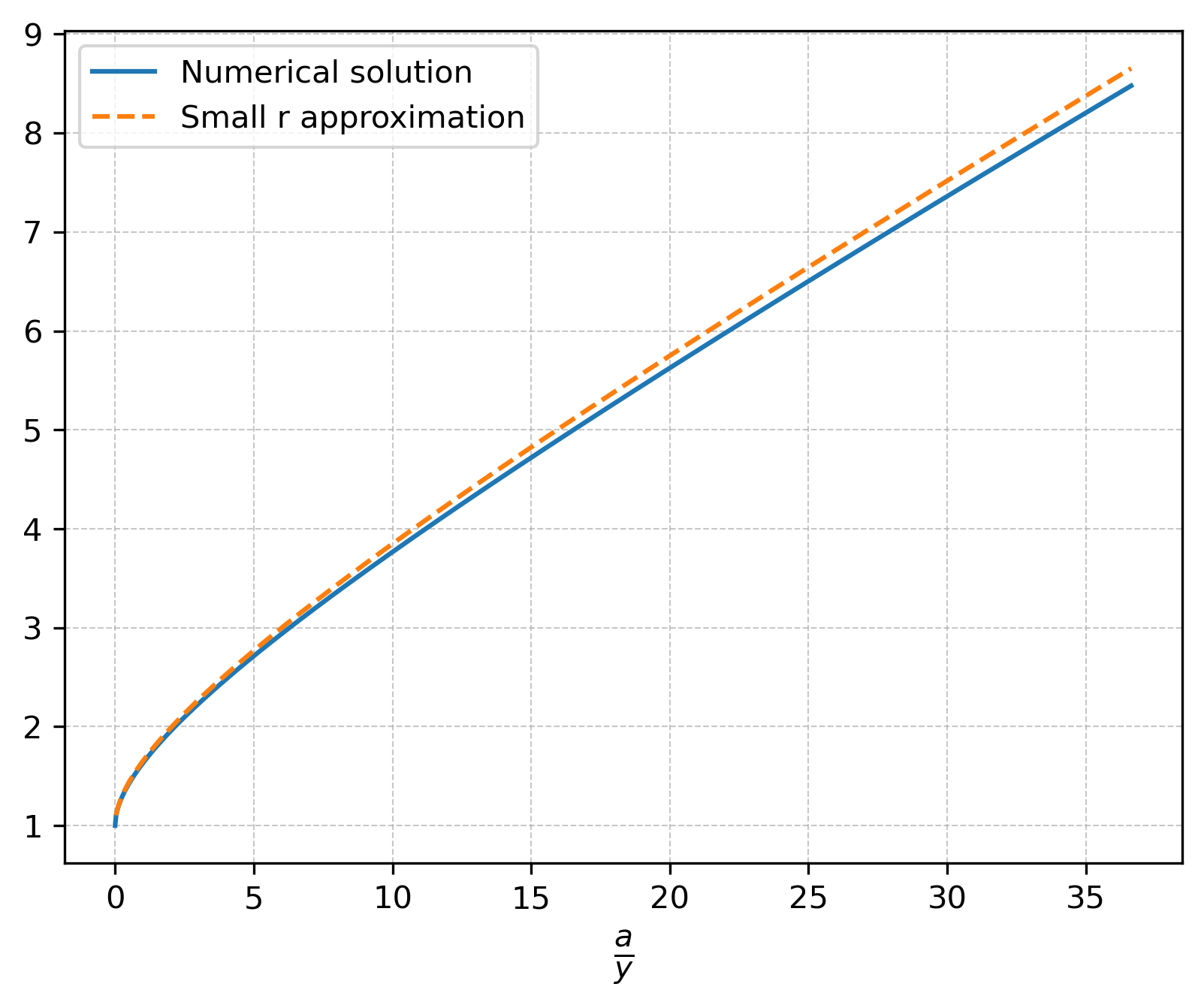}
	}
	\begin{minipage}{0.8\linewidth}
	{\vspace*{-.9em}
 	 \footnotesize\emph{Notes: Both consumption and initial assets are normalized with permanent income. The parameter values are the same as in Figure \ref{fig:ca_discrete}.} 
	 }
	\end{minipage}
    \label{fig:imp_exp}
\end{figure}

The main advantage of the consumption functions derived in Theorem \ref{thm:explicit} and Proposition \ref{prop:gca} is that they give an expression for consumption explicitly in terms of $a,y$ and all the other structural parameters of the model. As a result, it allows one to study in detail how the consumption function changes as a function of these parameters. I leverage that property to provide a complete characterization of the consumption function in terms of assets and permanent income in the next section. 

\section{An Explicit Characterization of the Consumption Function}

While much has been written on whether the consumption function is increasing and/or concave in assets, much less has been written on whether it is increasing and/or concave in permanent income. Recently, \cite{Commault2025Permanent_income} has shown that, in a life-cycle model with a permanent/transitory stochastic process for income, consumption is increasing and concave in permanent income. In this section, I leverage the explicit expression for the consumption function derived in the previous section to provide an explicit characterization of the consumption function and especially its first and second order derivatives. In other words, now that one has an expression for $c^\ast(a;y)$ a natural question is what form do the Jacobian vector and Hessian matrix look like. 

For ease of exposition, I focus my attention on the case with zero interest rate. Note that the consumption functions derived in Theorem \ref{thm:explicit} and Proposition \ref{prop:gca} are isomorphic. It follows that the proofs presented in this section extend to the global closed form approximation derived in Proposition \ref{prop:gca}. For the particular case of $r=0$, I show in the appendix that the following proposition holds:

\begin{prop}
\label{prop:Jacobian}
Assume that the consumption function $c^\ast(a;y)$ is defined as in Corollary \ref{cor:explicit}. Then it follows that its Jacobian vector $\mathbf{J}_c$ is given by:
\begin{align*}
\mathbf{J}_c &\defeq 
\begin{bmatrix}
\frac{\p c^\ast(a;y)}{\p a} & \frac{\p c^\ast(a;y)}{\p y}    
\end{bmatrix}\\
&\ = 
\begin{bmatrix}
\frac{\rho}{\gamma}\frac{w}{1+w} &   -w\left(1+\frac{\rho a}{\gamma y}\frac{1}{1+w}\right)   
\end{bmatrix}
\end{align*}
where I have defined $w\defeq W_{-1}(f(a;y))$ for convenience.
\end{prop}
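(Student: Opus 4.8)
The plan is to start from the closed-form expression in Corollary \ref{cor:explicit}, namely $c^\ast(a;y) = y\,e^{\rho h(a;y)/\gamma}$ with $h(a;y) = -\frac{a+\gamma y/\rho}{y} - \frac{\gamma}{\rho}W_{-1}(f(a;y))$ and $f(a;y) = -e^{-\frac{b}{y}(a+\gamma y/\rho)}$, $b=\rho/\gamma$, and to differentiate it directly. The key tool is the derivative identity for the Lambert $W$ function: $W'(x) = \frac{W(x)}{x(1+W(x))}$ for $x\neq 0$ (valid on each branch, in particular for $W_{-1}$). Writing $w = W_{-1}(f(a;y))$, the first step is to compute $\partial w/\partial a$ and $\partial w/\partial y$ via the chain rule, using $\partial f/\partial a$ and $\partial f/\partial y$. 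Since $f = -e^{g}$ with $g = -\frac{b}{y}(a+\gamma y/\rho) = -\frac{ba}{y} - 1$, we get $\partial g/\partial a = -b/y$ and $\partial g/\partial y = ba/y^2$, hence $\partial f/\partial a = -f\cdot b/y$ and $\partial f/\partial y = f\cdot ba/y^2$. Combining with $W'$ and the identity $x W'(x) = \frac{W(x)}{1+W(x)}$ (which conveniently eliminates the explicit $f$ and leaves only $w$), one obtains clean expressions $\partial w/\partial a = -\frac{b}{y}\cdot\frac{w}{1+w}$ and $\partial w/\partial y = \frac{ba}{y^2}\cdot\frac{w}{1+w}$.

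The second step is to differentiate $h$. From $h = -1 - a/y - \frac{\gamma}{\rho}w$ (using $\frac{a+\gamma y/\rho}{y} = a/y + \gamma/\rho$), we get $\partial h/\partial a = -1/y - \frac{\gamma}{\rho}\partial w/\partial a$ and $\partial h/\partial y = a/y^2 - \frac{\gamma}{\rho}\partial w/\partial y$. Substituting the expressions from step one and using $\frac{\gamma}{\rho}\cdot\frac{b}{y} = \frac{1}{y}$ (since $b=\rho/\gamma$) yields $\partial h/\partial a = -\frac{1}{y}\left(1 - \frac{w}{1+w}\right) = -\frac{1}{y}\cdot\frac{1}{1+w}$ and $\partial h/\partial y = \frac{a}{y^2}\left(1 - \frac{w}{1+w}\right) = \frac{a}{y^2}\cdot\frac{1}{1+w}$.

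The third step is to assemble the Jacobian. Since $c^\ast = y\,e^{\rho h/\gamma}$, we have $\partial c^\ast/\partial a = y\cdot\frac{\rho}{\gamma}e^{\rho h/\gamma}\cdot\partial h/\partial a = \frac{\rho}{\gamma}c^\ast\cdot\left(-\frac{1}{y(1+w)}\right)$. To match the stated form $\frac{\rho}{\gamma}\cdot\frac{w}{1+w}$, I note that $c^\ast$ can be rewritten in terms of $w$: from the transcendental relation defining $w$ one has $c^\ast(a;y) = -y w$ (this follows because $c^\ast(0;y) = y$, $w$ at $a=0$ equals $-1$, and more directly because the consumption path identity $c^\ast(a;y) = y e^{\rho h/\gamma}$ together with the change of variables $z = -bT - cb/y$, $z=w$, gives $e^{\rho h/\gamma} = e^{bh} = e^{-z-cb/y}\cdot(\text{const})$; the cleanest route is to verify $ze^z = \alpha$ implies $e^{bh} = -w$ after back-substitution). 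Then $\partial c^\ast/\partial a = \frac{\rho}{\gamma}\cdot\frac{(-yw)}{-y(1+w)} = \frac{\rho}{\gamma}\cdot\frac{w}{1+w}$, and similarly $\partial c^\ast/\partial y = e^{\rho h/\gamma} + y\frac{\rho}{\gamma}e^{\rho h/\gamma}\partial h/\partial y = -w + \frac{\rho}{\gamma}(-yw)\cdot\frac{a}{y^2(1+w)} = -w\left(1 + \frac{\rho a}{\gamma y}\cdot\frac{1}{1+w}\right)$, as claimed.

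The main obstacle I anticipate is the bookkeeping identity $c^\ast(a;y) = -y\,w$ — establishing that the exponential $e^{\rho h/\gamma}$ collapses to $-w$ requires carefully tracking the change of variables $z = -bT - cb/y$ back through equation \eqref{eq:W_alpha} and the definition of $h$; everything else is mechanical application of the $W'$ identity and simplification using $b = \rho/\gamma$. A secondary point needing care is sign-tracking: $w = W_{-1}(f) \in (-\infty,-1)$ for $a \in (0,\infty)$, so $1+w < 0$ and $-yw > 0$, which must be kept consistent to confirm signs (and incidentally re-derives that $\partial c^\ast/\partial a > 0$, matching Proposition \ref{prop:HL}).
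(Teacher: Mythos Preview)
Your approach is correct and essentially identical to the paper's: both compute $\partial h/\partial a$ and $\partial h/\partial y$ via the Lambert-$W$ derivative identity $W'(x)=W(x)/[x(1+W(x))]$, then collapse $e^{\rho h/\gamma}$ to $-w$ (equivalently $c^\ast=-yw$), which the paper derives from $e^{-W_{-1}(x)}=W_{-1}(x)/x$ combined with $\rho h/\gamma=-\frac{\rho a}{\gamma y}-1-w$. One minor slip: in your expansion of $h$ the constant term should be $-\gamma/\rho$, not $-1$; this does not affect the partial derivatives you compute, but keep it straight when you verify the identity $e^{\rho h/\gamma}=-w$, since $b\cdot\gamma/\rho=1$ is exactly what makes that simplification work.
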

\begin{proof}
See Appendix \ref{app:proof_Jacobian}.   
\end{proof}
This Jacobian vector gives us the direction in which consumption goes if either initial assets or permanent income increase. The key question now is to find out whether the entries of $\mathbf{J}_c$ are positive or negative. The answer is given in the following Corollary:

\begin{corollary}
\label{cor:Jacobian}
Assume that the impatience condition $(\rho>r)$ holds. Then it follows that both entries of $\mathbf{J}_c$ are strictly positive.     
\end{corollary}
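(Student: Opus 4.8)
The plan is to exploit the key structural fact that, for $a \in (0,+\infty)$, the argument $f(a;y) = -e^{-\frac{b}{y}(a+\gamma y/\rho)}$ lies in the interval $(-1/e, 0)$, so that $w \defeq W_{-1}(f(a;y))$ is a well-defined value of the second branch of the Lambert $W$ function. The standard properties of $W_{-1}$ are that it takes values in $(-\infty, -1]$ on the domain $[-1/e, 0)$; hence $w \le -1$, with $w = -1$ only in the limiting case $a \to 0$ (where $f \to -1/e$). In particular, on the relevant open range of assets we have $w < -1$, so that $1 + w < 0$ and $w < 0$.

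First I would handle $\partial c^\ast/\partial a = \frac{\rho}{\gamma}\frac{w}{1+w}$. Since $\rho, \gamma > 0$, the sign is that of $\frac{w}{1+w}$; with $w < -1$ both numerator and denominator are negative, so the ratio is strictly positive. This also gives $\frac{w}{1+w} \in (0,1)$, which will be useful for the second entry. Next I would turn to $\partial c^\ast/\partial y = -w\bigl(1 + \frac{\rho a}{\gamma y}\frac{1}{1+w}\bigr)$. Here $-w > 0$, so the sign is that of the bracketed factor. Rewriting the bracket as $1 + \frac{\rho a}{\gamma y}\cdot\frac{1}{1+w}$ and noting $\frac{1}{1+w} < 0$, I need to show this stays positive, i.e. that $\frac{\rho a}{\gamma y}\cdot\frac{1}{1+w} > -1$, equivalently (multiplying by the negative quantity $1+w$ and flipping the inequality) that $\frac{\rho a}{\gamma y} < -(1+w) = -1 - w$.

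The main obstacle is thus this last inequality, $\frac{\rho a}{\gamma y} < -1 - w$, which is not a pure sign fact but a genuine quantitative bound relating $w$ back to $a$. I would prove it by returning to the defining relation for $w$. From the change of variables $z = -bT - cb/y$ with $c = a + \gamma y/\rho$ and $b = \rho/\gamma$, and since $w = z$ solves $ze^z = f(a;y)$, one has the identity $w = -bT - \frac{b}{y}\bigl(a + \frac{\gamma y}{\rho}\bigr) = -\frac{\rho}{\gamma}T - \frac{\rho a}{\gamma y} - 1$, where $T = h(a;y) \ge 0$ is the (nonnegative) depletion time from Proposition \ref{prop:HL}. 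Rearranging gives exactly $-1 - w = \frac{\rho a}{\gamma y} + \frac{\rho}{\gamma}T \ge \frac{\rho a}{\gamma y}$, with strict inequality whenever $T > 0$, which holds for all $a > 0$. This closes the argument. (If one prefers to avoid reintroducing $T$, the same inequality follows from the elementary bound $W_{-1}(x) < \ln(-x) - 1$ for $x \in (-1/e,0)$ applied to $x = f(a;y)$, since $\ln(-f(a;y)) = -\frac{b}{y}(a + \gamma y/\rho) = -\frac{\rho a}{\gamma y} - 1$ gives $w < -\frac{\rho a}{\gamma y} - 2 < -1 - \frac{\rho a}{\gamma y}$.) Finally I would remark that by continuity the strict positivity of both entries persists, and note the claimed extension: since the consumption function of Proposition \ref{prop:gca} is isomorphic to that of Corollary \ref{cor:explicit} with $b, c, d$ replaced by $b_r, c_r, d_r$ and the impatience condition $\rho > r$ ensuring $d_r > 0$ and $b_r > 0$, the same sign computation goes through for $r \sim 0$.
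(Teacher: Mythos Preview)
Your main argument is correct and is essentially the paper's own proof. For the first entry you both use $w<-1$ so that $w/(1+w)>0$. For the second entry, both you and the paper reduce the question to showing $\frac{\rho a}{\gamma y} < -(1+w)$; the paper takes logs in the defining relation $we^{w}=f(a;y)$ to obtain $\frac{\rho a}{\gamma y}=u-1-\log u$ with $u=-w$ and then concludes via $\log u>0$, while you invoke the change-of-variables identity $w=-\frac{\rho}{\gamma}T-\frac{\rho a}{\gamma y}-1$ and conclude via $T>0$. Since $\frac{\rho}{\gamma}T=\log(-w)=\log u$, these are the same computation in different notation.

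One correction, however: your parenthetical alternative relies on the bound $W_{-1}(x)<\ln(-x)-1$ on $(-1/e,0)$, which is false near $x=-1/e$ (there $W_{-1}=-1$ while $\ln(-x)-1=-2$). The valid elementary bound is $W_{-1}(x)<\ln(-x)$ for $x\in(-1/e,0)$, which follows from $w=\ln(-x)-\ln(-w)$ and $\ln(-w)>0$. That weaker bound is already exactly what you need: with $\ln(-f(a;y))=-\frac{\rho a}{\gamma y}-1$ it gives $w<-\frac{\rho a}{\gamma y}-1$, i.e. $\frac{\rho a}{\gamma y}<-(1+w)$, without the spurious extra $-1$. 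So drop the ``$-1$'' in that aside and the alternative route is fine too.
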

\begin{proof}
See Appendix \ref{app:proof_Jacobian}.   
\end{proof}

The fact that the first entry is strictly positive is an established result in the literature: the consumption function is increasing in the level of initial assets. The fact that the Second entry is strictly positive is intuitive but difficult to prove formally as permanent income usually features either as a stochastic process or as a parameter when it is assumed to be constant \textemdash like in the current paper. In fact, \cite{Achdou2022income} provide some suggestive evidence that consumption varies one for one with $y$ when $a\sim 0$, but they do not prove that consumption increases with income for every $a$. \cite{Commault2025Permanent_income} does prove that the consumption function is increasing in income, but in a setup without borrowing constraints. 

Another feature of the consumption function that has been the subject of a lot of attention in the literature has to do with the second derivative of the consumption function: concavity. Indeed, one robust finding from the existing literature is that stochastic income and borrowing constraints are both responsible for this shape of the consumption function. With the exception of \cite{Commault2025Permanent_income} for a model with stochastic income but without borrowing constraints, there is to the best of my knowledge no known proof that the consumption function is concave in permanent income. Further, second derivatives can also inform us on whether the marginal propensity to consume out of a windfall in income is increasing or decreasing in initial assets $a$. Again, \cite{Commault2025Permanent_income} has shown that in a setup with stochastic income but without borrowing constraints that this MPC is increasing in $a$. In the current setup, if that property is true it implies that the MPC out of a windfall in assets $a$ is increasing in permanent income $y$. This is is due to the fact that the consumption function $c^\ast(a;y)$ has continuous second derivatives. The appropriate object to think about second-order derivatives is the Hessian matrix $\mathbf{H}_c$. The following proposition offers a characterization of that Hessian matrix.

\begin{prop}
\label{prop:Hessian}
Assume that the consumption function is defined as in Corollary \ref{cor:explicit}. Then it follows that:
\begin{align*}
\mathbf{H}_c &\defeq 
\begin{bmatrix}
\frac{\p^2 c^\ast(a;y)}{\p a^2} & \frac{\p^2 c^\ast(a;y)}{\p a \p y} \\
\frac{\p^2 c^\ast(a;y)}{\p y \p a} & \frac{\p^2 c^\ast(a;y)}{\p y^2}
\end{bmatrix}\\
&\ =
\begin{bmatrix}
-\frac{\rho^2}{\gamma^2 y}\frac{w}{(1+w)^3} & \frac{a\rho^2}{\gamma^2 y^2}\frac{w}{(1+w)^3}\\
\frac{a\rho^2}{\gamma^2 y^2}\frac{w}{(1+w)^3} & -\frac{\rho^2a^2}{\gamma^2 y^3}\frac{w}{(1+w)^3}
\end{bmatrix}
\end{align*}
where, as before, I have defined $w\defeq W_{-1}(f(a;y))$ for convenience.
\end{prop}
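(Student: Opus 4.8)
The plan is to obtain $\mathbf{H}_c$ by differentiating the Jacobian of Proposition \ref{prop:Jacobian} one more time, using only the defining relation of the Lambert $W$ function. The computation is streamlined by first recording the reduction
\begin{align*}
c^\ast(a;y) = -\,y\,w , \qquad w \defeq W_{-1}(f(a;y)) ,
\end{align*}
which follows in one line: writing $b\defeq\rho/\gamma$ and using $\gamma y/\rho=y/b$, the expression for $h$ in Theorem \ref{thm:explicit} gives $bh=-ba/y-1-w$, while $w e^{w}=f(a;y)=-e^{-ba/y-1}$ by definition of $W_{-1}$, so $c^\ast=ye^{bh}=ye^{-w}(-we^{w})=-yw$. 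As already noted below \eqref{eq:W_alpha}, for $a\in(0,\infty)$ one has $f(a;y)\in(-1/e,0)$, hence $w\in(-\infty,-1)$; in particular $1+w<0$, so none of the denominators below vanish and $c^\ast$ is $C^{\infty}$ on the relevant domain.

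First I would differentiate $w e^{w}=-e^{-ba/y-1}$ in $a$ and in $y$. Since $\tfrac{d}{dw}(we^{w})=(1+w)e^{w}$ and the factor $we^{w}$ reappears on the right-hand side, this gives
\begin{align*}
\frac{\p w}{\p a}=-\frac{b}{y}\,\frac{w}{1+w} , \qquad \frac{\p w}{\p y}=\frac{ba}{y^{2}}\,\frac{w}{1+w} .
\end{align*}
Substituted into $c^\ast=-yw$, these reproduce the entries of $\mathbf{J}_c$ in Proposition \ref{prop:Jacobian}, which is a convenient internal consistency check.

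Next I would differentiate a second time, using $\tfrac{d}{dw}(\tfrac{w}{1+w})=(1+w)^{-2}$. The entries that carry a derivative in $a$ are immediate: from $\p_a c^\ast=b\,\tfrac{w}{1+w}$ one gets $\p^{2}_{aa}c^\ast=b(1+w)^{-2}\p_a w=-\tfrac{b^{2}}{y}\tfrac{w}{(1+w)^{3}}$ and $\p^{2}_{ay}c^\ast=b(1+w)^{-2}\p_y w=\tfrac{b^{2}a}{y^{2}}\tfrac{w}{(1+w)^{3}}$, while $\p^{2}_{ya}c^\ast=\p^{2}_{ay}c^\ast$ by Schwarz's theorem (all second partials are continuous on the domain). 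The remaining entry is the only one needing care: from $\p_y c^\ast=-w-y\,\p_y w$ one has $\p^{2}_{yy}c^\ast=-2\,\p_y w-y\,\p^{2}_{yy}w$, and inserting $\p^{2}_{yy}w=-\tfrac{2ba}{y^{3}}\tfrac{w}{1+w}+\tfrac{b^{2}a^{2}}{y^{4}}\tfrac{w}{(1+w)^{3}}$ the two $\tfrac{w}{1+w}$-terms cancel, leaving $\p^{2}_{yy}c^\ast=-\tfrac{b^{2}a^{2}}{y^{3}}\tfrac{w}{(1+w)^{3}}$. Substituting $b=\rho/\gamma$ throughout gives the stated matrix; the same steps go through verbatim for the approximation of Proposition \ref{prop:gca}, since the two consumption functions are isomorphic.

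I expect the only mildly delicate step to be the bookkeeping in $\p^{2}_{yy}c^\ast$, where several product- and quotient-rule terms appear before collapsing; everything else is a mechanical chain-rule computation. As a global sanity check I would also note that $c^\ast(a;y)=-\,y\,W_{-1}(-e^{-ba/y-1})$ depends on $(a,y)$ only through $a/y$ apart from the linear prefactor, hence is homogeneous of degree one; Euler's theorem then forces $\mathbf{H}_c\,(a,y)^{\top}=\mathbf{0}$, i.e. $\mathbf{H}_c$ has rank one with its second row equal to $-a/y$ times its first row — exactly the structure of the displayed matrix.
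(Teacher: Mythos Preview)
Your proof is correct and in fact cleaner than the paper's. The paper never writes down the reduction $c^\ast(a;y)=-yw$; instead it carries the exponential $e^{\rho h(a;y)/\gamma}$ and the function $h$ through every computation, differentiating $h$ itself (and $h''(y)$ for the last entry) and only collapsing the exponential to $-w$ via $e^{-W_{-1}(x)}=W_{-1}(x)/x$ at the very end of each entry's derivation. It also computes the cross-derivative by differentiating $\partial_y c^\ast$ in $a$ rather than $\partial_a c^\ast$ in $y$, which generates several product-rule terms that cancel after a page of algebra. Your route---recognize $c^\ast=-yw$ once, implicitly differentiate $we^w=-e^{-ba/y-1}$ to get $\partial_a w$ and $\partial_y w$, then use $\tfrac{d}{dw}\tfrac{w}{1+w}=(1+w)^{-2}$---avoids all of that bookkeeping, and your homogeneity/Euler check (the consumption function is degree one in $(a,y)$, so $\mathbf{H}_c(a,y)^{\top}=0$) is an observation the paper does not make but which explains the rank-one structure of $\mathbf{H}_c$ at a glance. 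Both approaches are ultimately the same chain-rule exercise; yours just chooses better coordinates.
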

\begin{proof}
See Appendix \ref{app:proof_Hessian}. 
\end{proof}

The main take-away from Proposition \ref{prop:Hessian} is that all second-order derivatives are proportional to $w/(1+w)^3$. Therefore, one can use the fact that the second branch of the Lambert W function is strictly less than $-1$ on $(-1/e,0)$ to show that this fraction is unambiguously strictly positive. This immediately implies the following Corollary:

\begin{corollary}
\label{cor:Hessian}
Assume that the impatience condition $(\rho>r)$ holds. Then it follows that the diagonal elements of $\mathbf{H}_c$ are strictly negative and that the off-diagonals of $\mathbf{H}_c$ are strictly positive. As a result, $c^\ast(a;y)$ is supermodular.  
\end{corollary}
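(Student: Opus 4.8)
The plan is to reduce everything to the sign of a single scalar. By Proposition \ref{prop:Hessian}, each of the four entries of $\mathbf{H}_c$ equals a fixed structural-parameter multiple of the scalar $g(a;y)\defeq w/(1+w)^3$, where $w\defeq W_{-1}(f(a;y))$: the multiples are $-\rho^2/(\gamma^2 y)$ and $-\rho^2 a^2/(\gamma^2 y^3)$ on the diagonal and $a\rho^2/(\gamma^2 y^2)$ off the diagonal. On the relevant domain $a>0$, $y>0$ the first two are strictly negative and the third is strictly positive, so the Corollary follows at once if I can show $g(a;y)>0$.

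To pin down the sign of $g$, I would first locate the argument $f(a;y)$. Since the impatience condition with $r=0$ forces $\rho>0$, we have $b=\rho/\gamma>0$ and $b\gamma/\rho=1$, so $f(a;y)=-\exp\!\big(-1-\tfrac{b}{y}a\big)$, which lies in $(-1/e,0)$ for every $a>0$. On this interval the second (lower) branch $W_{-1}$ is real-valued and satisfies $W_{-1}(x)<-1$ (see \cite{Mezo2022lambert}); hence $w<-1<0$. Consequently the numerator $w$ is strictly negative and $1+w<0$, so $(1+w)^3<0$, and $g(a;y)=w/(1+w)^3$ is a quotient of two strictly negative quantities, i.e.\ $g(a;y)>0$.

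It then remains to assemble the pieces. Multiplying the strictly positive scalar $g$ by the strictly negative coefficients gives $\p^2 c^\ast/\p a^2<0$ and $\p^2 c^\ast/\p y^2<0$; multiplying it by the strictly positive coefficient $a\rho^2/(\gamma^2 y^2)$ (here using $a>0$) gives $\p^2 c^\ast/\p a\,\p y>0$. For supermodularity, note that by Theorem \ref{thm:explicit} and Corollary \ref{cor:explicit}, together with the smoothness of $h$ from Proposition \ref{prop:HL}, the function $c^\ast(a;y)$ is $C^2$ on the open rectangle $\{a>0,\ y>0\}$, and a $C^2$ function on a product of intervals is supermodular precisely when its mixed partial is everywhere nonnegative; since here it is in fact strictly positive, $c^\ast$ is (strictly) supermodular.

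There is essentially no hard step: the only non-elementary input is the inequality $W_{-1}<-1$ on $(-1/e,0)$, which is a standard property of the lower branch. The one point that deserves care is that strictness requires $a>0$ (and $y>0$): at $a=0$ one has $f=-1/e$, $w=-1$, the scalar $g$ is undefined, and the cross-partial degenerates to zero, so the claims are asserted on the interior of the asset domain rather than at its boundary.
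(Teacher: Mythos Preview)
Your proposal is correct and follows essentially the same approach as the paper: both arguments reduce the corollary to the single observation that $w=W_{-1}(f(a;y))<-1$ on $(-1/e,0)$, whence $w/(1+w)^3>0$, and then read off the signs of the Hessian entries from the signs of the structural-parameter coefficients displayed in Proposition~\ref{prop:Hessian}. Your additional remarks on the $C^2$ characterization of supermodularity and on the boundary case $a=0$ are sound but go slightly beyond what the paper records.
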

\begin{proof}
See Appendix \ref{app:proof_Hessian}. 
\end{proof}

To the best of my knowledge, the result that the consumption function is supermodular is new to the literature. There is a whole literature about supermodularity in economic theory, but it revovles around assuming that the utility function is supermodular \textemdash see \cite{Amir2005supermodularity} for a survey. In contrast, the result in the current paper is that the consumption function, which is an equilibrium object, is supermodular. In words, the fact that $\mathbf{H}_c$ is supermodular means that assets and permanent income are complements. This means that, as in \cite{Commault2025Permanent_income}, the MPC out of permanent income is increasing in initial assets $a$.

The intuition as to why the MPC out of permanent income is increasing in assets is as follows. When assets are low, an increase in permanent income is not met with a large increase in consumption due to the risk to hit the borrowing constraint. When assets are higher however, the consumer can stand to benefit from the increase in permanent income for two main reasons. First, if assets are relatively high the consumer has a lot of buffer savings. As a result they understand that the risk to hit the constraint is relatively low. In addition, an increase in permanent income also serve as a signal that all future periods will feature a higher income. As a result, the consumer wants to smooth consumption and bring some of this higher income forward in time. Having high assets enables the consumer to increase her consumption to reflect that. 

Given that the consumption function is twice-differentiable, this also means that the MPC out of assets is increasing in permanent income: a windfall increase in assets will be met with a comparatively higher consumption response for an individual with a higher permanent income. The intuition is similar to the one presented in the last paragraph. This is a property that could be tested using empirical data on consumption.

To conclude, note that the strictly negative diagonal elements of $\mathbf{H}_c$ encode the fact that the consumption function is guaranteed to be concave in both assets and permanent income. Intuitively, an increase in both has a diminishing marginal effect on consumption. While this has been known for a while for initial assets, the diminishing marginal effect of a windfall increase in permanent income for a given level of initial assets hasn't been proved in a setting with borrowing constraints.  

\section{Conclusion}

Given the unavailability of an explicit closed form solution for the consumption function in typical income fluctuation problems with borrowing constraints, one has to compute this object numerically. The same goes for the MPC out of assets that is implied by these models. This has been the case both in discrete as well as continuous time formulations of these models and numerical methods have been developed for both.

I have derived a closed form solution that is both global and explicit for $r=0$. For $r>0$, I have derived a closed form approximation that is valid as long as $r\sim 0$. Using these solutions, I have shown that the MPC out of permanent income is counter-intuitively increasing in the level of initial assets. Further, this enabled me to prove that the consumption function is increasing, concave in permanent income and initial assets, which act as complements.

In addition, the solutions developed in this paper could be used as a benchmark for numerical methods that have been cast in continuous time and assess their reliability when computing objects such as the MPC out of assets.

\bibliographystyle{apalike2}
\bibliography{IFP}

\appendix

\section{Proof of Theorem \ref{thm:implicit}}
\label{app:Thm1}
In order to prove theorem \ref{thm:implicit}, we will need a few intermediate lemmas. 

First, rearrange the constraint \eqref{eq:constr_cont} and multiply both sides by the integrating factor $e^{-rt}$, we get 
\begin{align*}
e^{-rt}c(t) &= e^{-rt}ra(t) - e^{-rt}\frac{da(t)}{dt}  + e^{-rt}y 
\end{align*}
Integrate both sides from 0 to t, we get
\begin{align}
\int_{0}^{t}{e^{-r\tau}c(\tau)}d\tau = -e^{-rt}a(t) + a - \frac{y}{r}(e^{-rt}-1)
\label{eq:budget_identity}
\end{align}
Under the borrowing constraint $a(t)\ge 0$, this yields
\begin{align*}
\int_{0}^{t}{e^{-r\tau}c(\tau)}d\tau \leq a - \frac{y}{r}(e^{-rt}-1)
\end{align*}
which is a necessary and sufficient condition for a consumption plan to be feasible. 

It follows that a consumption plan is feasible if, and only if, it satisfies the following two conditions:
\begin{align}
\label{eq:feasible_1_app}
\int_{0}^{t}{e^{-r\tau}c(\tau)}d\tau &\leq a - \frac{y}{r}(e^{-rt}-1) \\
c(t)&\geq 0
\label{eq:feasible_2_app}
\end{align}
We can now state the first lemma.
\begin{lemma}[Finiteness of maximal value] 
The maximal value of \eqref{eq:obj_cont} subject to \eqref{eq:constr_cont} and \eqref{eq:borrowing constraint} is finite. Specifically, for any feasible plan $c(t)$ 
\begin{align*}
\int_{0}^\infty e^{-\rho t} u(c(t))dt \leq \dfrac{u\left(\rho a + y\right)}{\rho}
\end{align*}
\label{lm:lemma1}
\end{lemma}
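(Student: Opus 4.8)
The plan is to first bound the present value of consumption discounted at the rate $\rho$ (not $r$), and then to apply Jensen's inequality, with the exponential weight playing the role of a probability density.

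For the first step I would \emph{not} use the integrating factor $e^{-rt}$ of \eqref{eq:budget_identity}, but instead multiply the law of motion \eqref{eq:constr_cont} by $e^{-\rho t}$. Using $e^{-\rho t}\tfrac{da(t)}{dt} = \tfrac{d}{dt}\bigl(e^{-\rho t}a(t)\bigr) + \rho e^{-\rho t}a(t)$ and rearranging gives
\[
e^{-\rho t}c(t) = -\frac{d}{dt}\bigl(e^{-\rho t}a(t)\bigr) - (\rho-r)e^{-\rho t}a(t) + e^{-\rho t}y .
\]
Integrating from $0$ to a finite horizon $T$ and using $a(0)=a$ yields
\[
\int_0^T e^{-\rho t}c(t)\,dt = a - e^{-\rho T}a(T) - (\rho-r)\int_0^T e^{-\rho t}a(t)\,dt + \frac{y}{\rho}\bigl(1-e^{-\rho T}\bigr).
\]
Now the borrowing constraint $a(t)\ge 0$ together with the impatience condition $\rho>r$ makes the second and third terms on the right nonpositive, so the right-hand side is at most $a+\tfrac{y}{\rho}$ for every $T$. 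Letting $T\to\infty$ and invoking monotone convergence (the integrand is nonnegative) gives $\int_0^\infty e^{-\rho t}c(t)\,dt \le a + \tfrac{y}{\rho}$; in particular this integral is finite.

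For the second step I would regard $d\nu(t) \defeq \rho e^{-\rho t}\,dt$ as a probability measure on $[0,\infty)$ and set $m \defeq \int_0^\infty c(t)\,d\nu(t) = \rho\int_0^\infty e^{-\rho t}c(t)\,dt \le \rho a + y < \infty$. Since $u$ is concave, Jensen's inequality gives $\int_0^\infty u(c(t))\,d\nu(t) \le u(m)$, and since $u$ is increasing, $u(m)\le u(\rho a+y)$. Dividing by $\rho$,
\[
\int_0^\infty e^{-\rho t}u(c(t))\,dt = \frac1\rho\int_0^\infty u(c(t))\,d\nu(t) \le \frac{u(\rho a + y)}{\rho},
\]
which is the stated bound (and it is trivially true in the degenerate case where the left-hand side equals $-\infty$, e.g.\ when $\gamma\ge 1$ and $c$ vanishes on a set of positive measure).

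The only points that require care, and which I expect to be the main obstacle to a fully rigorous write-up, are the passages to the limit at infinity: one should keep the finite horizon $T$ throughout and take the limit only at the end, so as not to presuppose that $\lim_{t\to\infty}e^{-\rho t}a(t)$ exists or that the improper integrals converge before the bound has been established. It is worth noting that the argument uses only $a(t)\ge 0$ and $\rho>r$ — not the constraint $c(t)\le a(t)$ — and that it remains valid when $r=0$ (then $\rho-r=\rho$). For the Jensen step one should also dispose of the boundary case $m=0$ separately (then $c=0$ holds $\nu$-a.e.\ and the inequality holds with equality), while for $m>0$ a supporting line of $u$ at $m$ gives the inequality directly even when $u$ takes the value $-\infty$ at $0$.
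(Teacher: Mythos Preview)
Your proof is correct and reaches the same destination as the paper, but the route differs in two places. First, to obtain the key budget inequality $\int_0^\infty e^{-\rho t}c(t)\,dt \le a + y/\rho$, the paper starts from the $e^{-rt}$-discounted constraint \eqref{eq:feasible_1_app}, multiplies by $e^{(r-\rho)t}$, integrates over $t$, and then applies integration by parts to the resulting double integral; you instead multiply the flow constraint directly by $e^{-\rho t}$ and integrate once. Your path is shorter, avoids the double integral entirely, and---unlike the paper's derivation, which carries $y/r$ terms that must be reinterpreted when $r=0$---is uniformly valid for $r\ge 0$. Second, to pass from the budget inequality to the utility bound, the paper poses an auxiliary maximization (maximize $\int e^{-\rho t}u(c(t))\,dt$ subject to $\int e^{-\rho t}c(t)\,dt \le a+y/\rho$) and asserts that the constant plan $\bar c=\rho a+y$ solves it; your Jensen argument with the probability measure $\rho e^{-\rho t}\,dt$ is precisely what justifies that assertion, so in effect you fill in the step the paper leaves implicit. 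The paper's route has the minor advantage that the $e^{-rt}$-identity \eqref{eq:budget_identity} is reused later in the feasibility and optimality lemmas, whereas your $e^{-\rho t}$ identity is specific to this lemma; but for the purpose of proving Lemma~\ref{lm:lemma1} your argument is the cleaner one.
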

\begin{proof}
If $c(t)$ is feasible, from \eqref{eq:feasible_1_app} we have
\begin{align}
\int_{0}^{\infty} e^{(r-\rho)t} \int_{0}^{t}{e^{-r\tau}c(\tau)} d\tau dt \leq \int_{0}^{\infty} e^{(r-\rho)t} \left[ a - \frac{y}{r}(e^{-rt}-1)\right] dt
\label{ieq:lemma1}
\end{align}
The RHS of \eqref{ieq:lemma1} evaluates to  
\begin{align*}
\frac{a}{\rho-r} + \frac{y}{\rho(\rho-r)} 
\end{align*}
Let us define 
\begin{align*}
C(t) \defeq \int_{0}^t e^{-r\tau} c(\tau)d\tau
\end{align*}
We set $u=C(t)$ as well as $dv=e^{(r-\rho) t}dt$. We then have $du = e^{-rt}c(t)dt$ and $v=\dfrac{e^{(r-\rho)t}}{r-\rho}$. Using integration by parts, the LHS of \eqref{ieq:lemma1} becomes
\begin{align}
\frac{e^{(r-\rho)t}}{r-\rho}C(t) \Bigg|_{0}^{\infty} + \frac{1}{\rho-r} \int_0^\infty e^{-\rho t}c(t)dt
\label{eq:LHS_lemma1}
\end{align}
Note that from \eqref{eq:budget_identity}, 
\begin{align*}
\int_{0}^{t}{e^{-r\tau}c(\tau)}d\tau +e^{-rt}a(t) &=  a - \frac{y}{r}(e^{-rt}-1) \\
\Rightarrow  \int_{0}^{\infty}{e^{-r\tau}c(\tau)}d\tau + \lim_{t\to\infty} e^{-rt}a(t) &=  a + \frac{y}{r} \\
\Rightarrow \lim_{t\to\infty} C(t) = \int_{0}^{\infty}{e^{-r\tau}c(\tau)}d\tau  &\leq  a + \frac{y}{r}
\end{align*}
where the last line is due to $a(t)\geq0$ and $c(t)\geq0$. Therefore, the first term of \eqref{eq:LHS_lemma1} evaluates to 0. Then, inequality \eqref{ieq:lemma1} simplifies to 
\begin{align}
\frac{1}{\rho-r} \int_0^\infty e^{-\rho t}c(t)dt &\leq \frac{a}{\rho-r} + \frac{y}{\rho(\rho-r)} \nonumber \\
\int_0^\infty e^{-\rho t}c(t)dt &\leq a + \frac{y}{\rho}
\label{ieq:constraint_lemma1}
\end{align}
That is, any feasible path is subject to \eqref{ieq:constraint_lemma1}. Now, consider the problem 
\begin{align*}
\max_{c(t)}\ \int_{0}^{\infty} e^{-\rho t}&u(c(t))dt 
\end{align*}
subject to \eqref{ieq:constraint_lemma1}. It can be shown that the solution to this problem is $c(t) = \overline{c} = \rho a + y$, and the maximal value is $\dfrac{u\left(\rho a + y\right)}{\rho}$. Thus, the present discounted utility of any feasible path is bounded from above by $\dfrac{u\left(\rho a + y\right)}{\rho}$, which completes the proof.
\end{proof}

We have now established that if a consumption plan is feasible, then the integral of discounted utility is bounded above. We now need to establish that the consumption plan $c^{\ast}(t)$ is indeed feasible. We prove that this is the case in the following lemma. 

The next result shows that the consumption plan $c^{\ast}(t)$ is indeed feasible.
\begin{lemma}[Feasibility of $c^{\ast}(t)$] The consumption plan $c^{\ast}(t)$ defined in \eqref{eq:cstar_implicit} is feasible.
\label{lm:lemma2}
\end{lemma}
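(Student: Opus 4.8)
The plan is to verify, one at a time, the two feasibility conditions \eqref{eq:feasible_1_app} and \eqref{eq:feasible_2_app} that were shown above to be equivalent to feasibility. Condition \eqref{eq:feasible_2_app}, $c^\ast(t)\geq 0$, is immediate: on $[0,T]$ the candidate $c^\ast(t)$ is $y\geq 0$ times a strictly positive exponential, and on $(T,\infty)$ it equals $y\geq 0$. So the entire content of the lemma sits in \eqref{eq:feasible_1_app}, which by the budget identity \eqref{eq:budget_identity} is equivalent to the statement that the asset path $a(t)$ generated by $c^\ast(t)$ out of the initial level $a(0)=a=\mu(T)$ stays nonnegative for every $t\geq 0$.

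The first step is to identify that asset path in closed form. The claim I would verify is that $a(t)=\mu(T-t)$ for $0\leq t\leq T$ and $a(t)=0$ for $t>T$. On $[0,T]$ this is checked by differentiating and invoking the ODE \eqref{eq:ODE_mu} satisfied by $\mu$: one obtains $\tfrac{d}{dt}\mu(T-t)=r\mu(T-t)+y-y\,e^{(\rho-r)(T-t)/\gamma}=ra(t)+y-c^\ast(t)$, which is exactly the state equation \eqref{eq:constr_cont}; the initial condition $a(0)=\mu(T)=a$ holds by hypothesis, and $a(T)=\mu(0)=0$. For $t>T$ one has $c^\ast(t)=y$, so the state equation reduces to $\dot a=ra$ with $a(T)=0$, whose unique solution is $a\equiv 0$. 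Since \eqref{eq:constr_cont} is a linear ODE, uniqueness of solutions pins $a(t)$ down, so assets are exhausted in exactly $T$ periods and remain at zero afterwards. (Note that this step uses only that $\mu$ solves \eqref{eq:ODE_mu} with $\mu(0)=0$; the explicit formula \eqref{eq:a_f(T)} — or its $r=0$ l'H\^opital form from Section \ref{sec:r0} — is not needed here.)

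It then remains to show $a(t)\geq 0$. For $t>T$ this is trivial, since $a(t)=0$. For $t\in[0,T]$ one route is simply to cite Proposition \ref{prop:HL}, which states that $\mu$ maps $\mathbb{R}^+$ into $\mathbb{R}^+$, whence $a(t)=\mu(T-t)\geq 0$. A self-contained alternative that avoids forward-referencing is to set $g(t)\defeq e^{-rt}a(t)$ and read off from \eqref{eq:budget_identity} that $g'(t)=e^{-rt}\bigl(y-c^\ast(t)\bigr)$; since the impatience condition $\rho>r$ makes the exponent $(\rho-r)(T-t)/\gamma$ nonnegative on $[0,T]$, we have $c^\ast(t)\geq y$ there, so $g'(t)\leq 0$, $g$ is non-increasing on $[0,T]$, and therefore $g(t)\geq g(T)=e^{-rT}a(T)=0$. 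Either way $a(t)\geq 0$ for all $t\geq 0$, which via \eqref{eq:budget_identity} is precisely \eqref{eq:feasible_1_app}, and the proof is complete.

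I expect the only real subtlety to be spotting the right candidate for the asset path, namely $a(t)=\mu(T-t)$ on $[0,T]$: once that guess is in hand, checking that it solves \eqref{eq:constr_cont} is a one-line differentiation, uniqueness for the linear state equation does the rest, and nonnegativity collapses either to Proposition \ref{prop:HL} or to the sign of $g'$, which hinges on the qualitative fact that $\rho>r$ forces $c^\ast(t)\geq y$ on $[0,T]$. The matching at $t=T$ is harmless, since $c^\ast$ is continuous there and $a(T)=\mu(0)=0$, so the zero-asset continuation for $t>T$ attaches continuously.
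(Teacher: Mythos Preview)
Your proof is correct, and it arrives at the same key identity $a(t)=\mu(T-t)$ on $[0,T]$ that the paper's proof establishes, but by a different and arguably cleaner route. The paper multiplies the ODE \eqref{eq:ODE_mu} by the integrating factor $e^{rx}$, integrates over $[0,T]$, and then manipulates the resulting integral identity together with the budget identity \eqref{eq:budget_identity} and a change of variables to extract $a(t)=\mu(T-t)$; it then handles $t\geq T$ by a separate direct computation of $\int_0^t e^{-r\tau}c^\ast(\tau)\,d\tau$ showing that \eqref{eq:feasible_1_app} holds with equality. You instead guess the asset path at the outset, verify it with a one-line differentiation against \eqref{eq:constr_cont}, and invoke uniqueness for the linear state equation to close the argument. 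Your approach is shorter, avoids the integral gymnastics, and makes the structure (guess, verify, uniqueness) transparent; the paper's approach has the advantage of being constructive in that it \emph{derives} the asset path from the data rather than conjecturing it. Your self-contained monotonicity argument for $g(t)=e^{-rt}a(t)$ is also a nice touch that isolates exactly where the impatience condition $\rho>r$ enters, whereas the paper simply asserts $\mu(T-t)>0$ without citing a supporting result at that point.
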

\begin{proof}
Let us reproduce equation \eqref{eq:ODE_mu} for convenience:
\begin{align}
\label{eq:ODE_mu_app}
\frac{d\mu(x)}{dx} + r\mu(x) + y &= e^{\frac{\rho-r}{\gamma}x}y \\
\nonumber\Rightarrow\quad e^{rx}\frac{d\mu(x)}{dx} + e^{rx}r\mu(x) + e^{rx}y &= e^{rx}e^{\frac{\rho-r}{\gamma}x}y
\end{align}
Integrate both sides from 0 to $T$, we get
\begin{align*}
\int_{0}^{T} e^{rx}\frac{d\mu(x)}{dx} + e^{rx}r\mu(x) + e^{rx}y ~dx &= \int_{0}^{T}e^{rx}e^{\frac{\rho-r}{\gamma}x}y~dx\\
\Rightarrow\quad e^{rT}\mu(T) - \mu(0) +\frac{y}{r}(e^{rT}-1)  &= \int_{0}^{T}e^{rx}e^{\frac{\rho-r}{\gamma}x}y~dx\\
\Rightarrow\quad \mu(T) + \frac{y}{r}(1-e^{-rT})  &= \int_{0}^{T}e^{r(x-T)}e^{\frac{\rho-r}{\gamma}x}y~dx
\end{align*}
where we have used the fact that $\mu(0) = 0$ in the last equality. Finally, apply the change of variable $t = T-x$ on the right-hand side, we get 
 \begin{align}
 \mu(T) + \frac{y}{r}(1-e^{-rT})  &= \int_{0}^{T}e^{-rt}e^{\frac{(\rho-r)(T-t)}{\gamma}x}y~dt \\
 &= \int_{0}^{T}{e^{-rt}c^{\ast}(t)}~dt
 \label{eq:cont_mu}
 \end{align}
where the last line follows from the definition of $c^{\ast}(t)$. 

Suppose $a = \mu(T)$. For $t<T$, 
\begin{align*}
    \int_{0}^{t}{e^{-r\tau}c^{\ast}(\tau)}d\tau &= -e^{-rt}a(t) + \mu(T) - \frac{y}{r}(e^{-rt}-1) \\
    &= -e^{-rt}a(t) + \mu(T) - \frac{y}{r}(e^{-rt}-e^{-rT}+e^{-rT}-1) \\
    &= -e^{-rt} \left[  a(t) - \frac{y}{r}(e^{-r(T-t)}-1) \right] + \mu(T) - \frac{y}{r}(e^{-rT}-1) \\
    &= -e^{-rt} \left[  a(t) - \frac{y}{r}(e^{-r(T-t)}-1) \right] + \int_{0}^{T}{e^{-r\tau}c^{\ast}(\tau)}d\tau
\end{align*}
where the last equation follows from \eqref{eq:cont_mu}. Rearrange this to arrive at 
\begin{align*}
e^{-rt} \left[  a(t) - \frac{y}{r}(e^{-r(T-t)}-1) \right] &= \int_{t}^{T}{e^{-r\tau}c^{\ast}(\tau)}d\tau \\
&= \int_{t}^{T} {e^{-r\tau}} e^{\frac{(\rho-r)(T-t)}{\gamma}\tau}yd\tau \\ 
&= \int_{0}^{T-t} {e^{-r(t+x)}}e^{\frac{(\rho-r)(T-t-x)}{\gamma}}ydx
\end{align*}
where the change of variable $x \defeq \tau - t$ was applied in the third equality. This gives us
\begin{align*}
a(t) - \frac{y}{r}(e^{-r(T-t)}-1)  &=  \int_{0}^{T-t} {e^{-rx}} e^{\frac{(\rho-r)(T-t-x)}{\gamma}}ydx \\
&=  \mu(T-t) - \frac{y}{r}(e^{-r(T-t)}-1) \\
\Rightarrow a(t) &= \mu(T-t) > 0
\end{align*}
This means the borrowing constraint holds strictly for $t<T$. Meanwhile, for $t \geq T$, 
\begin{align*}
    \int_{0}^{t}{e^{-r\tau}c^{\ast}(\tau)}d\tau &= \int_{0}^{T}{e^{-r\tau}c^{\ast}(\tau)}d\tau + \int_{T}^{t}{e^{-r\tau}c^{\ast}(\tau)}d\tau \\
     &= \int_{0}^{T}{e^{-r\tau} e^\frac{(\rho-r)(T-\tau)}{\gamma}}d\tau + \int_{T}^{t}{e^{-r\tau}y}d\tau \\
     &= \mu(T) - \frac{y}{r}(e^{-rT}-1) - \frac{y}{r}(e^{-rt}-e^{-rT}) \\
     &= \mu(T) - \frac{y}{r}(e^{-rt}-1) 
\end{align*}
and so condition \eqref{eq:feasible_1_app} holds with equality. This completes the proof. 
\end{proof}

The final step is to prove that the consumption plan $c^{\ast}(t)$ is optimal. Lemma \ref{lm:lemma2} tells us that $c^{\ast}(t)$ is feasible. By Lemma \ref{lm:lemma1}, we can define the difference 
\begin{align*}
\delta \defeq \int_{0}^{\infty} e^{-\rho t}\left[u(c(t))-u(c^{\ast}(t))\right]dt
\end{align*}
for any feasible consumption path $c(t)$. Since $u(\cdot)$ is strictly concave,
\begin{align*}
\delta &< \int_{0}^{\infty} e^{-\rho t}u'(c^{\ast}(t))(c(t)-c^{\ast}(t))dt \\
&= \int_{0}^{T} e^{-\rho t}u'(c^{\ast}(t))(c(t)-c^{\ast}(t))dt + \int_{T}^{\infty} e^{-\rho t}u'(c^{\ast}(t))(c(t)-c^{\ast}(t))dt
\end{align*}
By definition of $c^{\ast}(t)$, the first term of the RHS is
\begin{align} 
&\int_{0}^{T} e^{-\rho t}u'(c^{\ast}(t))[c(t)-c^{\ast}(t)]dt \nonumber\\
&=\int_{0}^{T} e^{-\rho t} e^{(\rho-r)(t-T)}u'(y) [c(t)-c^{\ast}(t)]dt \nonumber\\
&= e^{(r-\rho)T}u'(y) \int_{0}^{T} e^{-rt}[c(t)-c^{\ast}(t)]dt
\label{eq:theorem1_RHS1}
\end{align}
Meanwhile, the second term of the RHS is  
\begin{align}
& \int_{T}^{\infty} e^{-\rho t}u'(c^{\ast}(t))(c(t)-c^{\ast}(t))dt \nonumber\\
& = \int_{T}^{\infty} e^{-\rho t}u'(y)(c(t)-y)dt \nonumber\\
& = \int_{T}^{\infty} e^{(r-\rho)t}e^{-rt}u'(y)(c(t)-y)dt \nonumber\\
& = \int_{0}^{\infty} e^{(r-\rho)(T+t)}e^{-r(t+T)}u'(y)[c(T+t)-y]dt \nonumber\\
& = e^{(r-\rho)T}u'(y)\int_{0}^{\infty} e^{-\rho t - rT}[c(T+t)-y]dt
\label{eq:theorem1_RHS2}
\end{align}
Note that since c(t) is a feasible consumption path, condition \eqref{eq:feasible_1_app} implies
\begin{align}
\int_{0}^{\infty} e^{(r-\rho)t}\int_{0}^{t+T} e^{-r\tau}c(\tau)d\tau~dt &\leq \int_{0}^{\infty} e^{(r-\rho)t} \left[ a -\frac{y}{r}(e^{-r(T+t)} -1) \right] dt \nonumber\\
&= \int_{0}^{\infty} ae^{(r-\rho)t} -\frac{y}{r}e^{-\rho t - rT} + \frac{y}{r}e^{(r-\rho)t} ~dt \nonumber\\
& = \frac{a}{\rho-r} - \frac{ye^{-rT}}{r\rho} + \frac{y}{r(\rho-r)}
\label{ieq:theorem1_doublesum}
\end{align}
However, 
\begin{align*}
&\int_{0}^{\infty} e^{(r-\rho)t}\int_{0}^{t+T} e^{-r\tau}c(\tau)d\tau~dt\\
&= \int_{0}^{\infty} e^{(r-\rho)t} \left( \int_{0}^{T} e^{-r\tau}c(\tau)d\tau + \int_{T}^{t+T} e^{-r\tau}c(\tau)d\tau \right)~dt\\ 
&=\int_{0}^{\infty} e^{(r-\rho)t}\int_{0}^{T} e^{-r\tau}c(\tau)d\tau dt + \int_{0}^{\infty} e^{(r-\rho)t}\int_{0}^{t+T} e^{-r\tau}c(\tau)d\tau dt
\end{align*}
The first term on the RHS evaluates to 
\begin{align*}
\frac{1}{\rho-r}\int_{0}^{T} e^{-r\tau}c(\tau)d\tau 
\end{align*}
The second term on the RHS, by changing the order of integration, gives
\begin{align*}
\int_{T}^{\infty} \int_{\tau-T}^{\infty} e^{(r-\rho)t} e^{-r\tau}c(\tau) dt d\tau &= \int_{T}^{\infty} \frac{e^{(r-\rho)(\tau-T)}}{\rho-r} e^{-r\tau}c(\tau) d\tau \\
&= \int_{0}^{\infty} \frac{e^{(r-\rho)t}}{\rho-r} e^{-r(t+T)}c(t+T) dt \\
&= \frac{1}{\rho-r}\int_{0}^{\infty} e^{-\rho t - rT} c(t+T) dt
\end{align*}
where a change of variable is applied in the second equality. Altogether, inequality \eqref{ieq:theorem1_doublesum} becomes 
\begin{align*}
&\frac{1}{\rho-r}\int_{0}^{T} e^{-r\tau}c(\tau)d\tau + \frac{1}{\rho-r}\int_{0}^{\infty} e^{-\rho t - rT} c(t+T) dt\\
&= \int_{0}^{\infty} e^{(r-\rho)t}\int_{0}^{t+T} e^{-r\tau}c(\tau)d\tau~dt
\leq \frac{a}{\rho-r} - \frac{ye^{-rT}}{r\rho} + \frac{y}{r(\rho-r)}
\end{align*}
and so
\begin{align}
\int_{0}^{T} e^{-rt}c(t)dt + \int_{0}^{\infty} e^{-\rho t - rT} c(t+T) dt
&\leq a - \frac{(\rho-r)ye^{-rT}}{r\rho} + \frac{y}{r} \nonumber\\
&= a + \frac{ye^{-rT}}{\rho} - \frac{ye^{-rT}}{r} + \frac{y}{r} \nonumber\\
\Rightarrow \int_{0}^{\infty} e^{-\rho t - rT} c(t+T) dt - \frac{ye^{-rT}}{\rho} &\leq a - \frac{ye^{-rT}}{r} + \frac{y}{r} -\int_{0}^{T} e^{-rt}c(t)dt \nonumber\\
\Rightarrow \int_{0}^{\infty} e^{-\rho t - rT} [c(t+T)-y] dt  &\leq a - \frac{ye^{-rT}}{r} + \frac{y}{r} -\int_{0}^{T} e^{-rt}c(t)dt \label{ieq:theorem1_RHS2}
\end{align}

We now return to the difference $\delta$ between the present discount value of $c^{\ast}(t)$ and that of another feasible $c(t)$. Equalities \eqref{eq:theorem1_RHS1} and \eqref{eq:theorem1_RHS2} yield
\begin{align*}
\delta &< e^{(r-\rho)T}u'(y) \int_{0}^{T} e^{-rt}[c(t)-c^{\ast}(t)]dt +  e^{(r-\rho)T}u'(y)\int_{0}^{\infty} e^{-\rho t - rT}(c(T+t)-y)dt \\
&\leq e^{(r-\rho)T}u'(y) \int_{0}^{T} e^{-rt}[c(t)-c^{\ast}(t)]dt + e^{(r-\rho)T}u'(y) \left[ a - \frac{ye^{-rT}}{r} + \frac{y}{r} -\int_{0}^{T} e^{-rt}c(t)dt  \right]\\
&=e^{(r-\rho)T}u'(y) \left[ -\int_{0}^{T} e^{-rt}c^{\ast}(t)dt + a - \frac{ye^{-rT}}{r} + \frac{y}{r}  \right]\\
&=0
\end{align*}
where the second line holds due to inequality \eqref{ieq:theorem1_RHS2}. Therefore, $c^{\ast}(t)$ is the unique feasible consumption plan that maximizes the present discount value of utility. This completes the proof for Theorem \ref{thm:implicit}.

\section{Proof of Proposition \ref{prop:HL}}
\label{app:proof_HL}

For part 1, note that function $\mu(T)$ is a linear combination of exponentials and a constant function, both of which are infinitely differentiable on $\mathbb{R}^+$. As a result, it follows that $\mu(T)$ is infinitely differentiable on $\mathbb{R}^+$.

For part 2, I start by proving that $\mu(T)$ is injective using the fact that it is strictly monotone. First of all, note that the derivative of $\mu(T)$ is given by:
\begin{align*}
\mu^{'}(T) = y\frac{\rho - r}{r(\gamma-1)+\rho}\left(e^{\frac{\rho-r}{\gamma}T} - e^{-rT}\right)    
\end{align*}
Notice that under the assumptions that $\rho>r$ and $\gamma>0$ one can write
\begin{align*}
r(\gamma-1)>-r\quad\Rightarrow\quad \rho + r(\gamma-1) > \rho-r>0.    
\end{align*}
As a result, the constant term in front of the parentheses is strictly positive. Given that $\rho-r>0$ and $\gamma>0$, it follows that $e^{\frac{\rho-r}{\gamma}T}>1$ for $T\in \mathbb{R}^+$. Further, $r>0$ guarantees that $e^{-rT}<1$ for $T\in \mathbb{R}^+$. Putting it all together, we get 
\begin{align*}
\mu^{'}(T) = y\frac{\rho - r}{r(\gamma-1)+\rho}\left(e^{\frac{\rho-r}{\gamma}T} - e^{-rT}\right) >0\quad \text{for}\quad T\in \mathbb{R}^+.   
\end{align*}
As a result, $\mu(T)$ is strictly monotonic and thus injective. 

To prove surjectivity, it needs to be shown that $\mu(T)$ covers $\mathbb{R}^+$. To establish it, note that by definition we have $\mu(0)=0$. Further, given that $\rho>r$ the first exponential term of $\mu(T)$ dominates and we have $\lim_{T\to\infty} \mu(T) = +\infty$. As a result, given that $\mu(T)$ is continuous and increasing it covers $\mathbb{R}^+$. Putting it all together, this establishes that $\mu(T)$ is bijective on $\mathbb{R}^+$. 

For part 3, it needs to be shown that $\mu^{'}(T)$ is bijective and thus invertible. To establish injectivity, note that both $e^{\frac{\rho-r}{\gamma}T}$ and $-e^{-rT}$ are strictly increasing. As a result, $\mu^{'}(T)$ is strictly increasing and thus injective. 

Further, note that $\mu^{'}(T)=0$ by construction and $\lim_{T\to\infty} \mu^{'}(T)=+\infty$. In addition, as a sum of continuous functions $\mu^{'}(T)$ is also continuous. This establishes that $\mu^{'}(T)$ is bijective and thus that $\mu(T)$ is invertible on $\mathbb{R}^+$. 

As a result, the Hadamard-L\'evy theorem guarantees that $\mu(T)$ has a unique infinitely differentiable inverse. Given that it has been shown that $\mu^{'}(T)>0$, $\mu(T)$ is strictly increasing and so is its inverse. To show that the inverse is concave, let us define $T\defeq \mu^{-1}(a;y)$. From the chain rule, it follows that
\begin{align*}
(\mu^{-1})^{'}(a;y) = \frac{1}{\mu^{'}(T)}.     
\end{align*}
Using the chain rule once more, one obtains:
\begin{align*}
(\mu^{-1})^{''}(a;y) &=  \frac{d}{dT}\left(\frac{1}{\mu^{'}(T)}\right)\frac{dT}{da}   \\
&= -\frac{\mu^{''}(T)}{(\mu^{'}(T))^2}\frac{dT}{da} 
\end{align*}
Substituting $dT/da = 1/\mu^{'}(T)$, one finally obtains
\begin{align*}
(\mu^{-1})^{''}(a;y) &= -\frac{\mu^{''}(T)}{(\mu^{'}(T))^3}   
\end{align*}
where $T=\mu^{-1}(a;y)$. Since $\mu^{'}(T)>0$, the sign depends on $\mu^{''}(T)$. using the expression for $\mu^{'}(T)$, one can compute the second derivative as follows:
\begin{align*}
\mu^{''}(T) = y\frac{\rho-r}{r(\gamma-1)+\rho}\left(\frac{\rho-r}{\gamma}e^{\frac{\rho-r}{\gamma}T}+re^{-rT}\right)    
\end{align*}
Under the assumption that $\rho>r$, the first term on the right hand side is strictly positive. The two terms under parentheses are exponentials so they are necessarily positive. As a result, it follows that $\mu(T)$ is strictly convex and that its inverse is strictly concave. This concludes the proof of Proposition \ref{prop:HL}. 

\section{Proof of Proposition \ref{prop:Jacobian}}
\label{app:proof_Jacobian}

I first focus on the partial derivatives of $c^\ast(a;y)$ with respect to $a$. Using the chain rule, I get:
\begin{align}
    \frac{\p c^\ast(a;y)}{\p a} &= ye^{\rho\frac{h(a;y)}{\gamma}}\frac{\rho}{\gamma} \left ( -\frac{1}{y} - \frac{\gamma}{\rho} \frac{\partial}{\partial a} W_{-1}(f(a;y)) \right ) \nonumber \\
    &= ye^{\rho\frac{h(a;y)}{\gamma}}\frac{\rho}{\gamma} \left ( -\frac{1}{y} - \frac{\gamma}{\rho} \frac{W_{-1}(f(a;y))}{f(a;y)(1+W_{-1}(f(a;y)))} \frac{\partial}{\partial a} f(a;y) \right) \nonumber \\
    &= ye^{\rho\frac{h(a;y)}{\gamma}}\frac{\rho}{\gamma} \left ( -\frac{1}{y} - \frac{\gamma}{\rho} \frac{W_{-1}(f(a;y))}{f(a;y)(1+W_{-1}(f(a;y)))} f(a;y) \frac{-\rho}{\gamma y} \right) \nonumber \\
    &= ye^{\rho\frac{h(a;y)}{\gamma}}\frac{\rho}{\gamma y} \left ( -1 + \frac{W_{-1}(f(a;y))}{1+W_{-1}(f(a;y))}\right) \nonumber \\
    &= ye^{\rho\frac{h(a;y)}{\gamma}}\frac{\rho}{\gamma y} \left ( -1 + \frac{W_{-1}(f(a;y))}{1+W_{-1}(f(a;y))}\right) \nonumber \\
    &= -e^{\rho\frac{h(a;y)}{\gamma}}\frac{\rho}{\gamma} \frac{1}{1+W_{-1}(f(a;y))}
    \label{eq:expMPC}
\end{align}
where I have used the properties of the Lambert W function to compute its derivative \textemdash see \cite{Mezo2022lambert}, namely that $dW_{-1}(x)/dx = W_{-1}(x)/(x(1+W_{-1}(x)))$. Using the fact that the Lambert W function is such that $W_{-1}(x)e^{W_{-1}(x)} = x$, which in turn implies that $e^{-W_{-1}(x)}=W_{-1}(x)/x$, one can replace $\rho h(a;y)/\gamma$ with $-\frac{\rho a}{\gamma y} - (1+W_{-1}(f(a;y)))$ and obtain
\begin{align}
    \frac{\p c^\ast(a;y)}{\p a} &= -e^{-\frac{\rho a}{\gamma y} - (1+W_{-1}(f(a;y)))} \frac{\rho}{\gamma} \frac{1}{1+W_{-1}(f(a;y))} \nonumber \\
    &= -e^{-\frac{\rho a}{\gamma y} - 1} e^{-W_{-1}(f(a;y))} \frac{\rho}{\gamma} \frac{1}{1+W_{-1}(f(a;y))} \nonumber \\
    &= -e^{-\frac{\rho a}{\gamma y} - 1} \frac{W_{-1}(f(a;y))}{f(a;y)} \frac{\rho}{\gamma} \frac{1}{1+W_{-1}(f(a;y))} \nonumber \\
    &= \frac{\rho}{\gamma}\frac{W_{-1}(f(a;y))}{1+W_{-1}(f(a;y))}
\end{align}
By construction, given that $f(a;y)) \in (-1/e,0)$ one can guarantee that $w=W_{-1}(f(a;y))<1$. As a result, the fraction $w/(1+w)$ is strictly positive given that $1+w<0$. This establishes the first part of Corollary \ref{cor:Jacobian}.

Without loss of generality and for ease of exposition, I now fix $a$ and focus on $c(y)\defeq c^\ast(a;y)$. The same goes for functions $f$ and $h$ defined in Theorem \ref{thm:implicit}. Using the product and chain rules, the derivative of $c(y)$ with respect to $y$ is given by
\begin{align}
\nonumber
c'(y) &= e^{\frac{\rho}{\gamma}h(y)} + y\frac{\rho}{\gamma}\cdot h'(y) e^{\frac{\rho}{\gamma}h(y)}\\
&= e^{\frac{\rho}{\gamma}h(y)}\left[1+\frac{\rho}{\gamma}yh'(y)\right]
\label{eq:deriv_cy}
\end{align}
Let me focus first on the term in brackets in positive. From theorem \ref{thm:explicit}, we know that the function $h$ is given by:
\begin{align}
\label{eq:expression_hay}
h(y) = -\frac{a}{y} - \frac{\gamma}{\rho}\left[1+W_{-1}\left(-e^{-\frac{\rho a}{\gamma y}-1}\right)\right]    
\end{align}
where I have used the expression for $f(y)$ directly into the argument of the Lambert W function. From this, I can compute the derivative of $h$ as follows applying the chain rule a number of times:
\begin{align}
\nonumber
h'(y) &=  \frac{a}{y^2} + \frac{\gamma}{\rho}\frac{\rho a}{\gamma y^2}W^{'}_{-1}\left(-e^{-\frac{\rho a}{\gamma y}-1}\right)e^{-\frac{\rho a}{\gamma y}-1}\\
\nonumber
&= \frac{a}{y^2}\left[1-\frac{1}{e^{-\frac{\rho a}{\gamma y}-1}}\frac{W_{-1}\left(-e^{-\frac{\rho a}{\gamma y}-1}\right)}{1+W_{-1}\left(-e^{-\frac{\rho a}{\gamma y}-1}\right)}e^{-\frac{\rho a}{\gamma y}-1}\right]\\
&=\frac{a}{y^2}\left[1-\frac{W_{-1}\left(-e^{-\frac{\rho a}{\gamma y}-1}\right)}{1+W_{-1}\left(-e^{-\frac{\rho a}{\gamma y}-1}\right)}\right],
\label{eq:hprim_y}
\end{align}
where I have once again used the properties of the Lambert W function to compute its derivative. Simplifying equation \eqref{eq:hprim_y} further, we get
\begin{align*}
h'(y) = \frac{a}{y^2}\frac{1}{1+W_{-1}\left(-e^{-\frac{\rho a}{\gamma y}-1}\right)}
\end{align*}
It follows that the second element of the Jacobian is given by:
\begin{align*}
\frac{\p c^\ast(a;y)}{\p y} = e^{\frac{\rho}{\gamma}h(y)}\left[1+\frac{\rho a}{\gamma y}\frac{1}{1+w}\right].    
\end{align*}
As before, one can use the fact that $e^{-W_{-1}(x)}=W_{-1}(x)/x$ to obtain:
\begin{align*}
\frac{\p c^\ast(a;y)}{\p y} = -W_{-1}(f(a;y))\left[1+\frac{\rho a}{\gamma y}\frac{1}{1+W_{-1}(f(a;y))}\right]    
\end{align*}

Given the fact that $ -W_{-1}(f(a;y))>1$, the consumption function is increasing in $y$ if and only if the term in brackets is strictly positive. This is not obvious given that it is a sum of two terms with opposite signs \textemdash remember that $1+w<0$ so that the second term in brackets in strictly negative. Therefore, the objective is to prove that
\begin{align*}
1+\frac{\rho}{\gamma}y\frac{a}{y^2}\frac{1}{1+W_{-1}\left(-e^{-\frac{\rho a}{\gamma y}-1}\right)} = 1+\frac{\rho a}{\gamma y}\frac{1}{1+W_{-1}\left(-e^{-\frac{\rho a}{\gamma y}-1}\right)}>0    
\end{align*}
As before, I let $w=W_{-1}\left(-e^{-\frac{\rho a}{\gamma y}-1}\right)$. Equivalently, the consumption function is increasing in $y$ if
\begin{align}
\frac{\rho a}{\gamma y}<-(1+w).    
\label{eq:condition_cy_pos}
\end{align}
By construction, the Lambert W function satisfies the following equation:
\[
w e^w = f(y) = -\exp\left(-\frac{\rho a}{\gamma y} - 1\right).
\]
Taking out the exponential term on the right, we get:
\begin{align*}
-w e^w &= \frac{\exp\left(-\frac{\rho a}{\gamma y}\right)}{e}\\
\Rightarrow\quad -we^{w+1} &= \exp\left(-\frac{\rho a}{\gamma y}\right). 
\end{align*}
For convenience, let us define $u=-w>1$. Taking logs on both sides of the equation, we obtain:
\begin{align*}
\log(u) + 1-u = -\frac{\rho a}{\gamma y}    
\end{align*}
Combining with equation \eqref{eq:condition_cy_pos}, we get that the consumption function is increasing in $y$ if and only if
\begin{align*}
u-1-\log(u) <-1+u\quad \Leftrightarrow\quad \log(u) > 0    
\end{align*}
which in turns is equivalent to $u>1$ and thus $w<-1$. We have established that this holds already, therefore we can conclude that $c'(y)>0$. This proves the second part of Corollary \ref{cor:Jacobian}. 

\section{Proof of Proposition \ref{prop:Hessian}}
\label{app:proof_Hessian}

I start with the first element of $\mathbf{H}_c$. From Proposition \ref{prop:Jacobian}, I have
\begin{align*}
\frac{\p c^\ast(a;y)}{\p a} = \frac{\rho}{\gamma}\frac{w}{1+w}    
\end{align*}
Differentiating with respect to $a$ using the chain rule, I obtain: 
\begin{align*}
\frac{\p^2 c^\ast(a;y)}{\p a^2}  = \frac{\rho}{\gamma}\frac{w'}{(1+w)^2}   
\end{align*}
where I have defined $w'\defeq \p W_{-1}(f(a;y))/\p a$. Using once more the chain rule, that derivative can be shown to be equal to:
\begin{align*}
w'&=-\frac{\p }{\p a}W_{-1}(f(a;y))\\
&= \frac{\p }{\p a}f(a;y)W_{-1}^{'}(f(a;y))\\
&= \frac{\p }{\p a}f(a;y)\frac{w}{f(a;y)(1+w)}.
\end{align*}
Note further that one can write
\begin{align*}
\frac{\p }{\p a}f(a;y) = \frac{\rho}{\gamma y}e^{-\frac{\rho a}{\gamma y}-1}.    
\end{align*}
As a result, it follows that 
\begin{align*}
w' &= \frac{\rho}{\gamma y}e^{-\frac{\rho a}{\gamma y}-1}\frac{w}{-e^{-\frac{\rho a}{\gamma y}-1}(1+w)}\\
&= -\frac{\rho}{\gamma y}\frac{w}{1+w}.
\end{align*}
Putting it all together, I obtain:
\begin{align*}
\frac{\p^2 c^\ast(a;y)}{\p a^2}  &= \frac{\rho}{\gamma}\frac{w'}{(1+w)^2}\\
&= -\frac{\rho^2}{\gamma^2 y}\frac{w}{(1+w)^3}.
\end{align*}
Given that $1+w<0$, it follows that $(1+w)^3<0$ and thus that $-w/(1+w)^3<0$. As a result, given the fact that $a>0$, the first element of $\mathbf{H}_c$ is strictly negative. as the second fraction is a quotient of strictly negative values. This proves the claim in Corollary \ref{cor:Hessian} for the first diagonal term.

I now move on to the cross-derivative which defines the off-diagonal terms of $\mathbf{H}_c$. In order to compute the cross-derivative, one can start from
\begin{align*}
\frac{\p}{\p y}c^\ast(a;y) &= e^{\frac{\rho}{\gamma}h(a;y)}\left[1+\frac{\rho a}{\gamma y}\frac{1}{1+w}\right]
\end{align*}
Applying the chain rule, one gets:
\begin{align*}
\frac{\p^2 }{\p a\p y}c^\ast(a;y) &= \left[1+\frac{\rho a}{\gamma y}\frac{1}{1+w}\right]\frac{\rho}{\gamma}\frac{\p }{\p a}h(a;y)e^{\frac{\rho}{\gamma}h(a;y)}    \\
&+ e^{\frac{\rho}{\gamma}h(y)}\left[\frac{\rho}{\gamma y}\frac{1+w-aw'}{(1+w)^2}\right],
\end{align*}
where $w'$ is defined as before. Regrouping terms, I obtain:
\begin{align*}
\frac{\p^2 }{\p a\p y}c^\ast(a;y) &= e^{\frac{\rho}{\gamma}h(a;y)}\left[\left(1+\frac{\rho a}{\gamma y}\frac{1}{1+w}\right)\frac{\rho}{\gamma}\frac{\p }{\p a}h(a;y) + \frac{\rho}{\gamma y}\frac{1}{1+w} - \frac{\rho a}{\gamma y} \frac{w'}{(1+w)^2} \right].
\end{align*}
Focusing first on the partial derivative of the function $h$ defined in equation \eqref{eq:expression_hay}, I get:
\begin{align*}
\frac{\p }{\p a}h(a;y) &= -\frac{1}{y} -\frac{\gamma}{\rho}w'\\
&= -\frac{1}{y}+\frac{\gamma}{\rho}\frac{\rho w}{\gamma y(1+w)}\\
&= -\frac{1}{y(1+w)}.
\end{align*}
Plugging this into the expression for the cross-derivative, I obtain:
\begin{align*}
&e^{\frac{\rho}{\gamma}h(a;y)}\left[-\frac{\rho}{\gamma y (1+w)} - \frac{\rho^2}{\gamma^2y^2}\frac{a}{(1+w)^2}+ \frac{\rho}{\gamma y}\frac{1}{1+w} - \frac{\rho a}{\gamma y} \frac{w'}{(1+w)^2} \right]\\
=\ & e^{\frac{\rho}{\gamma}h(a;y)}\left[-\frac{\rho}{\gamma y (1+w)} - \frac{\rho^2}{\gamma^2y^2}\frac{a}{(1+w)^2}+ \frac{\rho}{\gamma y}\frac{1}{1+w} + \frac{\rho^2 }{\gamma^2 y^2} \frac{aw}{(1+w)^3}
\right]\\
=\ & \frac{e^{\frac{\rho}{\gamma}h(a;y)}}{\gamma^2 y^2 (1+w)^3}\left[-\rho\gamma y (1+w)^2-\rho^2 a(1+w)+\rho\gamma y(1+w)^2+\rho^2aw\right]\\
=\ & -\frac{e^{\frac{\rho}{\gamma}h(a;y)}}{\gamma^2 y^2 (1+w)^3}a\rho^2.
\end{align*}
Using the fact that the exponential in the numerator simplifies to $-w$, I finally obtain
\begin{align*}
\frac{\p^2 c^\ast(a;y)}{\p a \p y}  =\frac{a\rho^2}{\gamma^2 y^2}\frac{w}{(1+w)^3}. 
\end{align*}
Now using the fact that $(1+w)^3<0$ given the properties of the Lambert W function, it follows that 
\begin{align*}
\frac{\p^2 }{\p a\p y}c^\ast(a;y)>0    
\end{align*}
as the second fraction is a quotient of strictly negative values. This proves the claim in Corollary \ref{cor:Hessian} about the off-diagonal terms.

The last step is to prove that the consumption function is strictly concave in $y$: $c''(y)<0$. Taking the derivative of \eqref{eq:deriv_cy} to compute $c^{''}(y)$, I get
\begin{align}
\nonumber
c^{''}(y) &= \frac{\rho}{\gamma}h'(y)e^{\frac{\rho}{\gamma}h(y)}\left[1+\frac{\rho}{\gamma}yh'(y)\right] + e^{\frac{\rho}{\gamma}h(y)}\frac{\rho}{\gamma}\left[h'(y)+yh^{''}(y)\right]\\
\nonumber
&= \frac{\rho}{\gamma}e^{\frac{\rho}{\gamma}h(y)}\left[h'(y)\left(1+\frac{\rho}{\gamma}yh'(y)\right)+h'(y)+yh^{''}(y)\right]\\
&= \frac{\rho}{\gamma}e^{\frac{\rho}{\gamma}h(y)}\left[2h'(y)+\frac{\rho}{\gamma}y(h'(y))^2+yh^{''}(y)\right]
\label{eq:cprimprim}
\end{align}

Using the quotient and chain rules, we get
\begin{align*}
h''(y) = -\frac{a}{y^4 (1 + w)^2
} \left[ 2 y (1 + w) + \frac{w \rho a}{\gamma (1 + w)} \right].
\end{align*}
Using this, we can rewrite the term withing brackets in equation \eqref{eq:cprimprim} as follows:
\begin{align*}
&\frac{2a}{y^2(1+w)} + \frac{\rho y}{\gamma}\frac{a^2}{y^4(1+w)^2} -\frac{a y}{y^4 (1 + w)^2
} \left[ 2 y (1 + w) + \frac{w \rho a}{\gamma (1 + w)} \right]\\
=& \frac{2a}{y^2(1+w)} + \frac{\rho }{\gamma}\frac{a^2}{y^3(1+w)^2} - \frac{2a}{y^2(1+w)} - \frac{\rho a^2 w}{\gamma y^3(1+w)^3}\\
=& \frac{\rho a^2}{\gamma y^3(1+w)^2}\left(1-\frac{w}{1+w}\right) \\
=& \frac{\rho a^2}{\gamma y^3(1+w)^3}
\end{align*}
Using once again the fact that the exponential on the right simplifies to $-w$, I finally obtain
\begin{align*}
\frac{\p^2 c^\ast(a;y) }{\p y^2} = -\frac{ \rho^2a^2}{\gamma^2 y^3} \frac{w}{(1+w)^3},
\end{align*}
which completes the proof of Proposition \ref{prop:Hessian}. Regarding the sign of this second order derivative, it is strictly negative just as for the second order derivative with respect to $a$. This completes the proof of Corollary \ref{cor:Hessian}. 

\end{document}